\newif\ifextended
\let\llncssubparagraph\subparagraph
\let\subparagraph\paragraph
\let\subparagraph\llncssubparagraph
\newcommand{\upd}{\ensuremath{\mathit{upd}}}
\newcommand{\paths}{\ensuremath{\mathit{paths}}}
\newcommand{\maxpaths}{\ensuremath{\mathit{maxpaths}}}
\newcommand{\nodes}{\ensuremath{\mathit{nodes}}}
\newcommand{\out}{\ensuremath{\mathit{out}}}
\newcommand{\valid}{\ensuremath{\mathit{valid}}}
\newcommand{\true}{\ensuremath{\mathit{true}}}
\newcommand{\false}{\ensuremath{\mathit{false}}}
\newcommand{\up}{\ensuremath{\mathit{up}}}
\newcommand{\down}{\ensuremath{\mathit{down}}}
\newcommand{\wait}{\ensuremath{\mathit{wait}}}
\newcommand\incircbin
\newcommand\@incircbin[2]
\newcommand{\thewait}[1]{{\small\incircbin{#1}}}
\renewcommand\bibsection%
\newtheoremstyle{discstyle}
  {0.2em} %
  {0.2em} %
  {} %
  {} %
  {\bfseries} %
  {.} %
  {.5em} %
  {} %
\newtheorem*{rep@theorem}{\rep@title}
\newcommand{\newreptheorem}[2]{%
\newenvironment{rep#1}[1]{%
 \def\rep@title{#2 \ref{##1}}%
 \begin{rep@theorem}}%
 {\end{rep@theorem}}}
\theoremstyle{discstyle}
\newtheorem{lemmat}[lemma]{Lemma}
\theoremstyle{discstyle}
\theoremstyle{discstyle}
\newcommand{\cameraversion}[1]{\ifextended \else {#1} \fi}
\newcommand{\extendedversion}[1]{\ifextended {#1} \else \fi}
\newtheoremstyle{named}{0.2em}{0.2em}{}{}{\bfseries}{.}{.5em}{\thmnote{#3}}
\theoremstyle{named}
\newtheorem*{namedtheorem}{Theorem}
\newcommand{\makeproof}[8]{
\newcommand{#1}{
\ifthenelse{\equal{#7}{}}{
\begin{namedtheorem}[#4 \ref{#2}]
#5
\end{namedtheorem}
}{
\begin{namedtheorem}[#4 \ref{#2} \textnormal{(#7)}]
#5
\end{namedtheorem}
}
{\color{blue}
\begin{proof}
#6
\end{proof}}
}
\ifthenelse{\equal{#8}{}}{
\begin{#3}
\label{#2}
#5
\end{#3}
\extendedversion{\begin{proof} #6 \end{proof}}
}{
\begin{#3}[#7]
\label{#2}
#5
\end{#3}
\begin{proof} 
#6 
\end{proof}
}
}
\g@addto@macro\normalsize{%
  \setlength\abovedisplayskip{0pt}
  \setlength\belowdisplayskip{0pt}
  \setlength\abovedisplayshortskip{0pt}
  \setlength\belowdisplayshortskip{0pt}
}
  \titlespacing{\section}{0pt}{0.6em}{0.4em}
  \titlespacing{\subsection}{0pt}{0.6em}{0.4em}
  \titlespacing{\paragraph}{0cm}{0.6em}{0.4em}
  \titlespacing{\section}{0pt}{0.4em}{0.2em}
  \titlespacing{\subsection}{0pt}{0.4em}{0.2em}
  \titlespacing{\paragraph}{0cm}{0.4em}{0.3em}
\def\thespacing{0.5\baselineskip}
\def\thm@space@setup{%
  \thm@preskip=0.2em
  \thm@postskip=\thm@preskip %
}
\newcommand{\optnewpage}{}
\title{Optimal Consistent Network Updates in Polynomial Time\extendedversion{\\{\Large (Extended Version)}}}
\date{}
\author[1]{Pavol {\v C}ern\'y}
\author[2]{Nate Foster}
\author[1]{Nilesh Jagnik} 
\author[1]{Jedidiah McClurg}
\affil[1]{University of Colorado Boulder} 
\affil[2]{Cornell University}
\begin{document}

\maketitle

\begin{abstract} 
Software-defined networking (SDN) allows operators to control
the behavior of a network by programatically managing
the forwarding rules installed on switches.
However, as is common in distributed systems, it can be difficult
to ensure that certain consistency properties are preserved during
periods of reconfiguration. 
The widely-accepted notion of {\em per-packet consistency}
requires every packet to be forwarded using the new
configuration or the old configuration, but not a mixture of the two.
If switches can be updated in some (partial) order which
guarantees that per-packet consistency is preserved, we call
this order a {\em consistent order update}.
In particular, switches that are incomparable in this order can
be updated in parallel. We call a consistent order update
{\em optimal} if it allows maximal parallelism.
This paper presents a polynomial-time algorithm for finding an optimal
consistent order update. This contrasts with other recent results in the literature,
which show that for other classes of properties (e.g., loop-freedom and waypoint
enforcement), the optimal update problem is {\sc np}-complete.
\end{abstract}

\section{Introduction}
\label{sec:intro}

Software-defined networking (SDN) replaces
conventional network management interfaces with higher-level APIs. 
While SDN has been used to build a wide variety of useful
applications, in practice, it can be difficult for operators to
{\em correctly} and {\em efficiently} reconfigure the network, i.e., update the
global set of forwarding rules installed on switches (known as a
\emph{configuration}). Even if the initial and final configurations
are free of errors, na{\"i}vely updating individual switches (referred
to in this paper as {\em switch-updates}) can lead to incorrect
transient behaviors 
such as forwarding loops, blackholes, bypassing a firewall, etc. In
certain cases, updating switches in parallel can lead to incorrect
transient behavior, but in other cases we can correctly parallelize switch updates.
Therefore, we need a partial order on switch-updates which
ensures that correctness properties hold before, during, and
after the update. 

\paragraph{Consistent order updates.}
This paper investigates the problem of computing a {\em
  consistent order update}. Given an initial and final network configuration,
a consistent order update is a partial order on switch-updates,
such that if the switches are updated according to this
order, an important consistency property called {\em per-packet
  consistency}~\cite{Reitblatt:2012:ANU:2342356.2342427} is guaranteed
throughout the update process. This property guarantees that each packet
traversing the network will follow a single global configuration:
either the initial one, or the final one, but not a mixture of the
two. In particular, this means that if the initial and the final
configurations are loop-free, blackhole-free, prevent bypassing a
firewall, etc., then so
do all intermediate configurations.

\paragraph{Optimal consistent order updates.}
In implementing a consistent order update, we would generally prefer
to use one that is optimal. A consistent order
update is {\em optimal} if it allows the most parallelism among all
consistent order updates. Formally, recall that a consistent order
update is a partial order on switch-updates---an optimal partial order
is one where the length of the longest chain in the order is the
smallest among all possible correct partial orders. Intuitively, this
means the update can be performed in the smallest number of
``rounds,'' where rounds are separated by waiting for in-flight
packets to exit the network and by waiting for all the switch updates
from the previous rounds to finish.

\paragraph{Single flow vs. multiple flows.}
A {\em flow} is a restriction of a network configuration to packets of
a single type, corresponding to values in packet headers. A packet
type might include the destination address, protocol number
(TCP vs. UDP), etc.
We show that if we consider flows to be {\em symbolic} (i.e. represented
by predicates over packet headers, potentially matching multiple
flows), then the problem is 
{\sc co-np}-hard. %
In this
paper, we focus on the problem of updating an {\em individual} flow---i.e.,
we are interested in the situation where the flows to be updated can be
enumerated. Furthermore, 
as we are looking for efficient consistent order updates, we focus on the
case where each switch can be updated at most once, from its initial to its final configuration.  

\paragraph{Main result.}
Our main result is that for updating a single flow, there is a
polynomial-time algorithm, with $O(n^2(n+m))$ complexity where $n$ is
the number of switches and $m$ the number of links. The
result is interesting both  
theoretically and practically. On the theoretical side, 
recent papers have presented complexity results for network
updates. However, for many other consistency properties
(loop-freedom, waypoint enforcement) and network models, the
optimal network update problem is {\sc np}-hard \cite{dudycz2016can,
  forsterconsistent, Ludwig:2015:SLN:2767386.2767412, LudwigRFS14,
  ludwig2016transiently, luo2016arrange}. The same is true for results
that study these problems with a model which is the same as ours
(single flows, update every switch at most
once). %
In contrast, we provide a {\em positive} result that there exists a
polynomial-time algorithm for optimal order updates for a single
flow, with respect to the per-packet consistency property. The
consistency properties studied in these papers (loop-freedom and
waypoint enforcement) are weaker than per-packet consistency, which
offers a trade-off: enforcing only (for instance) loop-freedom allows 
more updates to be found, but it is an (exponentially) harder problem.
In practice, network operators might wish to update only a small number
of flows, and here our polynomial-time algorithm would be
advantageous. A potential limitation is that if many flows
are considered separately, it could lead to large forwarding tables. 

\paragraph{Algorithm.}
Our algorithm models a network configuration as a directed graph with
unlabeled edges, and an update from an initial configuration to a
final configuration as a sequence of individual switch-updates---i.e.,
updating the outgoing edges at each switch. In order to determine
whether a switch $n$ can be updated while properly respecting the
per-packet consistency property, we define a set of conditions on the
paths {\em upstream} and {\em downstream} from $n$. We show that
these conditions can be checked in $O(n(n+m))$ time. In this way,
the algorithm produces a partial order on switches, representing the
consistent order update (if such an order does not exist, our algorithm
reports a failure). Additionally, we show that if the partial order is
constructed greedily (i.e., all nodes that can be updated
are immediately updated in parallel),
it results in an {\em optimal} consistent order update.
The challenging part of the proof is
to show that this algorithm is complete (i.e., always finds a
consistent order update if one exists) and optimal.

\section{Overview}
\label{sec:overview}

This section presents a number of simple examples to help develop intuition about the
consistent order updates problem and the challenges that any solution must address.

\paragraph{Consistent order updates.}
\begin{figure}[t]
\centering
\begin{minipage}{.4\textwidth}
  \centering
  \tikzset{ 
	VertexStyle/.append style = { minimum size = 12pt, inner sep = 0pt, font =\tiny }
	}
  \begin{tikzpicture}
  \Vertex[L=$H_1$]{H1}
  \NO[unit=1](H1){A}
  \NO[unit=1](A){C}
  \EA[unit=2,L=$H_2$](H1){H2}
  \NO[unit=1](H2){B}
  \NO[unit=1](B){D}

  \draw[-triangle 45](H1.north west) -- (A.south west);
  \draw[dashed, -triangle 45](H1.north east) -- (A.south east);

  \draw[-triangle 45](A) -- (C);
  \draw[-triangle 45](B.south west) -- (H2.north west);
  \draw[dashed, -triangle 45](B.south east) -- (H2.north east);

  \draw[-triangle 45](C) -- (B);
  \draw[dashed, -triangle 45](A) -- (D);
  \draw[dashed, -triangle 45](D) -- (B);
  \end{tikzpicture}
  \caption{Trivial update.}%
  \label{fig:easy}
\end{minipage}%
\begin{minipage}{.6\textwidth}
  \centering
  \tikzset{ 
	VertexStyle/.append style = { minimum size = 12pt, inner sep = 0pt, font =\tiny }
	}
  \begin{tikzpicture}
  \Vertex[L=$H_1$]{H1}
  \NOEA(H1){A}
  \SOEA(H1){B}
  \NOEA(B){C}
  \NOEA(C){D}
  \SOEA(C){E}
  \NOEA[L=$H_2$](E){H2}

  \draw[-triangle 45](H1.south east) -- (B.north west);
  \draw[dashed,-triangle 45](H1.north east) -- (A.south west);
  \draw[dashed,-triangle 45](A.south east) -- (C.north west);
  \draw[-triangle 45](B.north east) -- (C.south west);
  \draw[dashed,-triangle 45](C.north east) -- (D.south west);
  \draw[-triangle 45](C.south east) -- (E.north west);
  \draw[dashed,-triangle 45](D.south east) -- (H2.north west);
  \draw[-triangle 45](E.north east) -- (H2.south west);
  \end{tikzpicture}
  \caption{Double diamond: no consistent update order exists.}
  \label{fig:doubled}
\end{minipage}
\end{figure}
Consider Figure~\ref{fig:easy}. In the initial configuration $C_i$
(denoted by {\em solid} edges), the forwarding-table rules (outgoing edges) on
each switch are set up such that host $H_1$ is sending
packets to $H_2$ along the path $H_1 {\rightarrow} A {\rightarrow} C {\rightarrow} B {\rightarrow} H_2$.
Let us assume that switch $C$ is scheduled for maintenance, meaning we must first
transition to configuration $C_f$ (denoted by the {\em dashed} edges).
Note that the two configurations differ only for nodes $A$ and
$D$. If the node $A$ is updated before node $D$, packets from $H_1$ will be
dropped at $D$. On the other hand, updating $D$ before $A$ leads to a
consistent order update. 
Note that since we model networks as graphs, we will
use the terms {\em switch} and {\em node} interchangeably based on the context,
and similarly for the terms {\em edge} and {\em forwarding rule}.
{\em Path} will be used to describe a sequence of adjacent edges.

In Figure~\ref{fig:doubled}, regardless of the order in which we update
nodes, there will always be inconsistency. Note that here the nodes
$A$ and $D$ can be updated first, but a problem arises due to nodes $H_1$
and $C$. Specifically, if $C$ is updated before $H_1$, then the network is in a
configuration containing a path $H_1{\rightarrow}B{\rightarrow}C{\dashrightarrow}D{\dashrightarrow}H_2$,
which is not in
either $C_i$ or $C_f$. In other words, $H_1$ cannot be updated
unless the (downstream) path from $C$ to $H_2$ is first updated.
On the other hand, $C$ cannot be updated unless the (upstream)
path from $H_1$ to $C$ is first
updated. We refer to this case as a {\em double diamond}. If we consider
the notion of dependency graphs \cite{MahajanW13}, where there is an edge
from a node $x$ to node $y$ if the update of $y$ can only be executed after
the update of $x$, then our double diamond example corresponds to a
cyclic dependency graph between $H_1$ and $C$. 

Unfortunately, the presence of a double diamond (cyclic
dependency) does not necessarily indicate that there cannot be a
solution. Consider Figure~\ref{fig:rem_dd}, where there
is a double diamond between $D$ and $J$. Updating $B$ removes the old
traffic to $D$, and then after updating $B$, the nodes
$D,E,G,F,H,I,J$ have no incoming traffic. At this point, these
nodes can be updated without violating per-packet
consistency. Thus, the circular dependency has been eliminated,
allowing a valid update order such as
$[A,H_1,K,L,B,D,E,F,G,H,I,J,\allowbreak C,M]$. This shows that an approach (such
as \cite{yuan2014generating,Jin:2014:DSN:2619239.2626307}) based on a static dependency
graph might miss some cases where a consistent order update exists---a
limitation that our algorithm does not exhibit.

\begin{figure}[h]
\centering
\begin{minipage}{.6\textwidth}
  \centering
  \tikzset{ 
	VertexStyle/.append style = { minimum size = 12pt, inner sep = 0pt, font =\tiny }
	}
\begin{tikzpicture}
\Vertex[L=$H_1$]{H1}
\EA(H1){A}
\NOEA[unit=0.71](A){B}
\SOEA[unit=0.71](A){C}
\SOEA[unit=0.71](B){D}
\NOEA[unit=0.71](D){E}
\SOEA[unit=0.71](D){F}
\SOEA[unit=0.71](E){G}
\NOEA[unit=0.71](G){H}
\SOEA[unit=0.71](G){I}
\SOEA[unit=0.71](H){J}
\EA(J){K}
\EA[L=$H_2$](K){H2}
\NO[unit=0.8](G){L}
\SO[unit=0.8](G){M}

\draw[dashed,-triangle 45](H1.north east) -- (A.north west);
\draw[-triangle 45](H1.south east) -- (A.south west);

\draw[dashed,-triangle 45](A.north) -- (B.west);
\draw[-triangle 45](A.north east) -- (B.south west);

\draw[dashed,-triangle 45](A.south east) -- (C.north west);
\draw[-triangle 45](A.south) -- (C.west);

\draw[-triangle 45](B.south east) -- (D.north west);

\draw[-triangle 45](D.north east) -- (E.south west);
\draw[dashed,-triangle 45](D.south east) -- (F.north west);

\draw[dashed,-triangle 45](C.north east) -- (D.south west);
\draw[-triangle 45](E.south east) -- (G.north west);
\draw[dashed,-triangle 45](G.south east) -- (I.north west);
\draw[-triangle 45](G.north east) -- (H.south west);
\draw[-triangle 45](H.south east) -- (J.north west);

\draw[dashed, -triangle 45](F.north east) -- (G.south west);
\draw[dashed, -triangle 45](I.north east) -- (J.south west);

\draw[dashed,-triangle 45](J.north east) -- (K.north west);
\draw[-triangle 45](J.south east) -- (K.south west);

\draw[dashed,-triangle 45](K.north east) -- (H2.north west);
\draw[-triangle 45](K.south east) -- (H2.south west);

\draw[dashed,-triangle 45](B) to[out=45,in=135] (L);
\draw[dashed,-triangle 45](L) to[out=45,in=90] (K);

\draw[-triangle 45](C) to[out=-45,in=-135] (M);
\draw[-triangle 45](M) to[out=-45,in=-90] (K);

\end{tikzpicture}\vspace{-8pt}
\caption{Removable double diamond.}
\label{fig:rem_dd}
\end{minipage}
\begin{minipage}{.35\textwidth}
  \centering
  \tikzset{ 
	VertexStyle/.append style = { minimum size = 12pt, inner sep = 0pt, font =\tiny }
	}
  \begin{tikzpicture}
  \Vertex[L=$H_1$]{H1}
  \EA[unit=1.5](H1){B}
  \NOEA[unit=1.25](H1){A}
  \SOEA[unit=1.25](H1){C}
  \draw[dashed,-triangle 45](H1.north) -- (A.west);
  \draw[-triangle 45](H1.north east) -- (A.south west);
  
  \draw[dashed,-triangle 45](H1.south east) -- (C.north west);
  \draw[-triangle 45](H1.south) -- (C.west);
  
  \draw[dashed,-triangle 45](H1.north east) -- (B.north west);
  \draw[-triangle 45](H1.south east) -- (B.south west);
  
  \draw[dashed,-triangle 45](C) -- (B);
  \draw[-triangle 45](A) -- (B);
  
  \draw[dashed,-triangle 45] (B) -- (5:2.5cm);
  \draw[-triangle 45] (B) -- (-5:2.5cm);

  \draw[dashed,-triangle 45] (A) -- (35:2.5cm);

  \draw[-triangle 45] (C) -- (-35:2.5cm);

  \end{tikzpicture}
  \caption{Wait example.}
  \label{fig:waitdual}
\end{minipage}%
\end{figure}

\paragraph{Waits.}
As mentioned, it may be impossible to
parallelize certain updates---we may need to make sure that
some node $x$ is updated before another node $y$. We may need to {\em wait} 
during the sequence of switch-updates to ensure that such updates are
executed one after the other. 
This requirement can arise because when updating a node,
we may need to ensure that (1) all of the previous switch-updates have been completed, and
(2) all of the packets that were in the network since before the previous update have exited
the network.
The former type we call a {\em switch-wait}, and the latter a {\em packet-wait}. 

In Figure~\ref{fig:rem_dd}, we see that $L$ must be updated before
updating $B$. To ensure that edges outgoing from $L$ are ready,
we must wait after sending the update command to $L$, in
order to ensure that its forwarding rules have been fully installed.
In other words, we say that there is a {\em switch-wait} required between updates of $L$ and $B$.
After updating $B$, the switch $D$ becomes disconnected, but there may still be some
packets in transit on the $B {\rightarrow} D$ path. Before updating $D$,
we must ensure that packets along these old removed paths have been
flushed from the network. For this reason, we need a {\em packet-wait}
between updates of nodes $D$ and $B$.

If we are interested only in
finding a correct sequence of updates, we can wait (for an amount of
time larger than the maximum switch-wait and packet-wait duration)
after every node update.
However, waits may not be necessary after every update if we
update switches from separate parts of the network. For the Figure~\ref{fig:rem_dd} example,
the correct sequence with a {\em minimal} number of waits is
$[A,H_1,K,L,\thewait{s},B,\thewait{p},D,E, F,\allowbreak G, H,I,J,\thewait{s},C,M]$, where $\thewait{p}$ denotes
a packet-wait and $\thewait{s}$ denotes a switch-wait. In this example, nodes
$A$, $H_1$, $K$, $L$ can be updated in parallel. Similarly, nodes
$D$, $E$, $F$, $G$, $H$, $I$ can be updated in parallel, etc. There are
three waits, meaning this consistent order update
requires {\em four} switch-update {\em rounds}.

The example in Figure~\ref{fig:waitdual} highlights the relationship between
switch-waits and packet-waits.
Observing that the configurations are roughly symmetrical,
let us examine the relationship between nodes $A$, $B$, $C$.
The correct order of updates between these nodes is
$H_1,A,\thewait{p},B,\thewait{s},C$.
There must be a {\em switch-wait} between the updates of $B$ and $C$,
due to the presence of a $C_f$ path $C{\dashrightarrow}B$.
There must be a {\em packet-wait} between updates of switches $A$ and $B$,
due to the presence of a $C_i$ path $A{\rightarrow}B$. 

As is common elsewhere (e.g. \cite{LudwigRFS14}), in this paper,
we do not distinguish between 
packet-waits and switch-waits, and only use the term {\em wait}---our goal is
to maximize the parallelism of switch-updates,
i.e. minimize the number of switch-update rounds. 

\section{Network Model}
\label{sec:model}

\paragraph{Network and Configurations.}   A topology of a network is a
graph $G = (N,E)$, where $N$ is a set of nodes, and $E$ is a set of
directed edges. A configuration $C \in \mathcal{P}(E)$ is a subset of
edges in $E$. A {\em proper} configuration is 
such that (a) it has one source $H_1$ and (b) it is acyclic. Here, a source is a
designated node with no incoming edges, representing the point where
packets enter the network.  Note that cycles in a configuration are
undesirable, as this would mean that traffic might loop forever in the
network.    
We first consider the
case with one source, and in Section~\ref{sec:discussion}, we describe
a simple reduction for the case of multiple sources. 
Our goal is to transition from an initial configuration $C_i$
to a final configuration $C_f$ by updating individual nodes. Consider
$C_i$ and $C_f$ to be fixed throughout the paper, and 
assume both are proper.

\paragraph{Updates.}   
Let $u$ be a node, and let $C$ be a configuration. We define a function
$\out(C,u)$ which returns the set of edges from $C$ whose source is $u$. 
The function $\upd_1(C,u)$ returns the configuration $C'$ such that 
$C' = (C \setminus \out(C_i,u)) \cup \out(C_f,u)$. That is, $C'$ has the node $u$
updated to the final configuration. 
Let $R$ be the set of all sequences that can be
formed using nodes in $N$ without repetition. We
extend $\upd_1$ to sequences of nodes by defining the function $\upd$
that, given a configuration $C$ and a sequence of 
nodes $S$, returns a  configuration $C'=\upd(C,S)$. The function $\upd$
is defined by $\upd(C,\varepsilon)=C$ (where $\varepsilon$ is the empty
sequence), and
$\upd(C,uS)=\upd(\upd_1(C,u),S)$. We consider sequences of nodes
without repetition, because our goal is to find update sequences that
update every node at most once.

\paragraph{Paths.} Given a configuration $C$, a $C$-path is a directed
path (finite or infinite) whose edges are in $C$. For a path $p$, we
write $p \in C$ if $p$ is a $C$-path. 
A $C_i$-only path is one which is in
$C_i$ and not in  
$C_f$. Similarly, a $C_f$-only path is in $C_f$ but
not $C_i$. The function $\nodes$ takes a path $q$ as an argument and returns
a set $Q$ of all nodes on a path. 
Let $s$ and $t$ be two nodes, and let $C$ be a
configuration. The function $\paths(s,t,C)$ returns the 
set of all paths between $s$ and $t$ in configuration
$C$. A path $p$ in a configuration $C$ is {\em maximal} if it is
either (a) finite, and its last node has no outgoing edges in $C$, or
(b) infinite. 
The function $\maxpaths(s,C)$ returns the set of all maximal paths
starting at node $s$ in configuration $C$.

\paragraph{Path and Configuration Consistency.}

We say that a path $p$ is \emph{consistent} if $p \in
\maxpaths(H_1,C_i) \lor p \in \maxpaths(H_1,C_f)$, and a configuration
$C$ is {\em consistent} if and only if $\forall p \in
\maxpaths(H_1,C)$, we have that $p$ is consistent. Intuitively, all
maximal paths starting at $H_1$ are maximal paths in either the old
configuration or the new configuration---%
this corresponds to per-packet consistency~\cite{Reitblatt:2012:ANU:2342356.2342427}. If initial
configuration $C_i$ and final configuration $C_f$ are proper, then
so is every consistent configuration. 

\paragraph{Waits.}  Let $U=u_1u_2\cdots u_k$ be a sequence of node
updates. Let $C_j=\upd(C_i,U_j)$ be the configuration
reached after updating a sequence $U=u_1u_2\cdots u_j$ for 
$1 \leq j \leq k$, and let $C_0=C_i$.
For $l,u$ such that $0 \leq l \leq u \leq k$, let $C^u_l$ be the
configuration obtained as a union of configurations 
$C_l \cup \cdots \cup C_u$.
We say that a {\em wait is needed} between $u_j$ and $u_k$ in U if and only if the
configuration $C^k_{j-1}$ is not consistent. 
To illustrate, let us return to the example in
Figure~\ref{fig:waitdual} (note that we no longer distinguish
between packet-waits and switch-waits). As mentioned,
after updating $H_1$ and $A$, we need a wait before updating $B$. Let
the configuration $C_v$ be the union of all the intermediate
configurations until after the update to $B$. Then $C_v$
has the path $H_1 {\rightarrow} A {\rightarrow} B {\rightarrow}$,
where we take the solid edge from $A$ to $B$ and a dashed outgoing
edge from $B$, meaning a wait is needed. In this case, using the union of the
configurations captures the reason for the wait.

\paragraph{Consistent update sequence.}
For any set of nodes $S$,
let $\pi(S)$ be the set of sequences that can be formed by nodes in
$S$, without repetition. 
Let $Z = S_1S_2 \cdots S_k$ be a sequence such that each $S_i$ is a subset of
$N$. 
Let $\pi(Z)$ be the set of sequences defined by 
$\{ r_1r_2 \cdots r_k \mid r_1 \in \pi(S_1) \land  r_2 \in \pi(S_2)
\land \cdots \land r_k \in \pi(S_k) \}$. 
 
The sequence $Z = S_1 S_2 \cdots S_k$ is a {\em consistent update
  sequence} if and only if
\begin{compactenum}
  \item The sets $S_1,S_2,\cdots,S_k$ form a partition of the set of nodes
    $N$. Note that this ensures that $\forall U \in \pi(Z)$, we have
    $\upd(C_i,U)=C_f$, i.e., after updating $u$, we are in 
    $C_f$. 
  \item $\forall U \in \pi(Z)$, for every prefix $U'$ of $U$,
    $C{=}\upd(C_i,U')$ is a consistent configuration. 
  \item $\forall U \in \pi(Z)$, let $U'=u_1u_2\cdots u_j$ and
    $U''=u_1u_2\cdots u_k$ be prefixes of $u$, s.t. $k>j$, then if
    a wait is needed between $u_j, u_k$ in $U$, then $u_j, u_k$ are in different sets $S$ and $S'$. 
\end{compactenum}

\paragraph{Consistent Order Update Problem.}   
Given an initial configuration $C_i$ and the final configuration
$C_f$, the {\em consistent order update problem} is to find a consistent update
sequence if there exists one. 

\paragraph{Optimal Consistent Order Update Problem.}
Given $C_i$ and 
$C_f$, if a consistent update sequence exists, the {\em optimal
  consistent update problem} is to find a consistent update sequence
of minimal length. 

\section{OrderUpdate Algorithm}
\label{sec:algo}
\bgroup
\def\arraystretch{1.5}
\begin{figure}[t]
\footnotesize
\setlength\tabcolsep{8pt}
\begin{center}
  \begin{tabular}{|p{0.01\textwidth}|p{0.44\textwidth}|p{0.44\textwidth}|}
  	\hline
    & Upstream $\newline$(Condition for $\paths(H_1,s,C_c)$)  & Downstream $\newline$(Condition for $\maxpaths(s,C_c)$)\\\hline
    A   &   $Y_a(s) = \not\exists p \in \paths(H_1,s,C_c)$   &   $Z_a^{\dagger}(s) = (\out(s,C_f)=\emptyset)\lor \newline\hspace*{1.15cm} \forall p \in \maxpaths(s,\upd(C_c,s)):\hspace*{1.25cm} p \in \maxpaths(s,C_f)$\\\hline
    B & $Y_b(s) = \lnot Y_a(s) \land \forall p \in \paths(H_1,s,C_c) : \hspace*{1.0cm} p \in \paths(H_1,s,C_i) \newline\hspace*{0.75cm}\land p \in \paths(H_1, s, C_f)$   &   $Z_b(s) = \forall p \in \maxpaths(s,upd(C_c,s)): \hspace*{1.25cm} p \in \maxpaths(s,C_i) \newline\hspace*{1.0cm} \lor p \in \maxpaths(s,C_f)$\\\hline
    C & $Y_c(s) = \lnot Y_a(s) \land \lnot Y_b(s) \newline\hspace*{0.5cm} \land \forall p \in paths(H_1,s,C_c) : \newline\hspace*{1.025cm} p \in \paths(H_1,s,C_f)$   &   $Z_c(s) = \forall p \in \maxpaths(s,upd(C_c,s)): \hspace*{1.25cm} p \in \maxpaths(s,C_f)$ \\\hline
    D & $Y_d(s) = \lnot Y_a(s) \land \lnot Y_b(s) \newline\hspace*{0.5cm} \land \forall p \in paths(H_1,s,C_c) : \newline\hspace*{1.025cm} p \in \paths(H_1,s,C_i)$   &   $Z_d(s) = \forall p \in \maxpaths(s,upd(C_c,s)): \hspace*{1.25cm} 	p \in \maxpaths(s,C_i)$\\\hline
    E & $Y_e(s) = \hspace*{0.2cm} \lnot Y_a(s) \land \lnot Y_b(s) \newline\hspace*{1cm} \land \lnot Y_c(s) \land \lnot Y_d(s)$
    $\newline\hspace*{0.75cm}=(\exists p_{f} \in \paths(H_1,s,C_c): \newline\hspace*{1.4cm} p_{f} \in \paths(H_1,s,C_f) \newline\hspace*{1.05cm} \land p_{f} \not\in \paths(H_1,s,C_i)) \newline\hspace*{0.75cm} \land (\exists p_{i} \in \paths(H_1,s,C_c):\newline\hspace*{1.4cm} p_{i} \in \paths(H_1,s,C_i) \newline\hspace*{1.05cm} \land p_{i} \not\in \paths(H_1,s,C_f))$
    & $Z_e(s) = \forall p \in \maxpaths(s,upd(C_c,s)): \hspace*{1.25cm} p \in \maxpaths(s,C_i) \newline\hspace*{1.0cm} \land p \in \maxpaths(s,C_f) $\\
    \hline
  \end{tabular}
\end{center}
\captionsetup{font=footnotesize}
\caption{Necessary conditions for updating a node $s$ in current configuration $C_c$}%
\label{fig:condtable}
\end{figure}
\egroup

This section presents an algorithm (Algorithm~\ref{alg:orderupdate})
that solves the consistent order update problem. It works by
repeatedly finding and updating a node that can be updated without
violating consistency. For clarity, we focus first on
correctness. Section~\ref{sec:waits} presents an improved version that
finds an optimal update.

\paragraph{Correct Sequence.}   
A \emph{correct} sequence of node updates $T=t_1t_2\cdots t_{|N|}$
refers to a consistent update sequence of singleton sets
$Z=S_1S_2\cdots S_{|N|}$ s.t. $\forall j \in [1,|N|] : S_j =
\{t_j\}$. Algorithm~\ref{alg:orderupdate} uses a subroutine at
Line~\ref{alg:line:pick} (in this section, the subroutine is
Algorithm~\ref{alg:pickdef}; in Section~\ref{sec:waits} we will
replace it with Algorithm~\ref{alg:pickandwait} to achieve
optimality) to find a correct update sequence. It takes $C_i, C_f$
as input and returns two sequences of nodes, $R, R_w$. 
Sequence $R$ is the solution to the consistent order update problem
(a sequence of singleton sets). Sequence $R_w$
contains information about the placement of waits, which will be the
same as $R$ in this section, since we initially wait after every node
update.

\subsection{Necessary Conditions for Updating a Node}

To determine which node updates lead to consistent configurations, we
assume that the network is in a consistent configuration $C_c$, and
identify a set of necessary conditions which must hold in order for
the update to preserve consistency.
We classify nodes into five categories based on the types of paths
that are incoming to them from $H_1$. The classification is given in
the left-hand side of Figure~\ref{fig:condtable}.

\paragraph{Upstream Paths and Candidate Nodes.}   
Paths from source $H_1$ to a node $s$ are called \emph{upstream} paths
to $s$ (in some configuration). The condition on these paths is called
the upstream condition. If a node satisfies the upstream condition for
one of the five categories/types, it is known as a \emph{candidate} of
that type. 

\paragraph{Downstream Paths and Valid Nodes.}   
Downstream paths from a node $s$ are maximal paths starting at $s$ (in
some configuration). For each of the upstream conditions, there is a downstream condition which must
be satisfied, in order to ensure that all maximal paths starting from
$H_1$ in $\upd(C_c,s)$ through $s$ are consistent. If a candidate node
satisfies the corresponding downstream condition, it is called
\emph{valid}. A node which is not valid is called
\emph{invalid}. Note that upstream paths to $s$ are the same in $C_c$ and
$\upd(C_c,s)$.

\makeproof{\lemcorrectness}{lemma:correctness}{lemmat}{Lemma}
{In a consistent configuration $C_c$, if a valid node $s$ is updated,
then $\upd(C_c,s)$ is consistent. 
}
{
\begin{figure}
\centering
\begin{minipage}{.5\textwidth}
  \centering
  \tikzset{ 
	VertexStyle/.append style = { minimum size = 12pt, inner sep = 0pt, font =\tiny, color=black }
	}
  \begin{tikzpicture}
  \SetGraphColor{white}{blue}{magenta}
  \Vertex[L=$H_1$]{H1}
  \EA[L=$s$](H1){B}
  \EA[L=$ $](B){D}
  \NOEA[L=$ $](H1){A}
  \SOEA[L=$ $](H1){C}
  \EA[L=$ $](A){E}
  \EA[L=$ $](C){F}

  \draw[dashed,-triangle 45](H1.20) -- (B.160);
  \draw[-triangle 45](H1.-20) -- (B.-160);
  
  \draw[dashed,-triangle 45](B.20) -- (D.160);
  \draw[-triangle 45](B.-20) -- (D.-160);

  \draw[dashed,-triangle 45](B) -- (A);
  \draw[dashed,-triangle 45](A) -- (E);
  \draw[-triangle 45](B) -- (C);
  \draw[-triangle 45](C) -- (F);
  
  \begin{pgfonlayer}{background}
  \draw[line width = 5pt, gray!30](H1) -- (B);
  \draw[line width = 5pt, gray!30](B) -- (D);
  \draw[line width = 5pt, gray!30](B) -- (C);
  \draw[line width = 5pt, gray!30](C) -- (F);
  \draw[line width = 5pt, gray!30](A) -- (E);
  \end{pgfonlayer}{background}
  
  \end{tikzpicture}
  \caption{Type B Valid Node.}
  \label{fig:typeB}
\end{minipage}%
\begin{minipage}{.5\textwidth}
  \centering
  \tikzset{ 
	VertexStyle/.append style = { minimum size = 12pt, inner sep = 0pt, font =\tiny, color=black }
	}
  \begin{tikzpicture}
  \SetGraphColor{white}{blue}{magenta}
  
  \Vertex[L=$H_1$]{H1}
  \NOEA[L=$ $](H1){D}
  \SOEA[L=$ $](H1){E}
  \NOEA[L=$s$](E){S}
  \EA[L=$ $](S){B}
  \NOEA[L=$ $](S){A}
  \SOEA[L=$ $](S){C}

  \draw[dashed,-triangle 45](H1) -- (D);
  \draw[dashed,-triangle 45](D) -- (S);
  \draw[-triangle 45](H1) -- (E);
  \draw[-triangle 45](E) -- (S);
  \draw[dashed,-triangle 45](S.north) -- (A.west);
  \draw[-triangle 45](S.north east) -- (A.south west);
  
  \draw[dashed,-triangle 45](S.south east) -- (C.north west);
  \draw[-triangle 45](S.south) -- (C.west);
  
  \draw[dashed,-triangle 45](S.20) -- (B.160);
  \draw[-triangle 45](S.-20) -- (B.-160);
  
  \begin{pgfonlayer}{background}
  \draw[line width = 5pt, gray!30](H1) -- (E);
  \draw[line width = 5pt, gray!30](E) -- (S);
  \draw[line width = 5pt, gray!30](H1) -- (D);
  \draw[line width = 5pt, gray!30](D) -- (S);
  \draw[line width = 5pt, gray!30](S) -- (B);
  \draw[line width = 5pt, gray!30](S.60) -- (A.-150);
  \draw[line width = 5pt, gray!30](S.-60) -- (C.150);
  \end{pgfonlayer}{background}

  \end{tikzpicture}
  \caption{Type E Valid Node.}
  \label{fig:typeE}
\end{minipage}%
\end{figure}
Given a consistent configuration $C_c$, $\forall p \in
\maxpaths(H_1,\upd(C_c,s)): s \not\in \nodes(p) \rightarrow p \in
\maxpaths(H_1,C_c)$. Maximal paths that are not touched by $s$ are
retained from $C_c$ in $\upd(C_c,s)$. From consistency of $C_c$, these
paths are consistent. For checking the consistency of $\upd(C_c,s)$,
it is enough to ensure that $\forall p \in \maxpaths(H_1,upd(C_c,s)) :
s \in \nodes(p) \rightarrow p$ is consistent. We use this in the rest
of the proof. 
Our necessary conditions
for updating a node ensure that all maximal paths, starting from
$H_1$, in $upd(C_c,s)$ through $s$ are consistent. 
Figure~\ref{fig:condtable} identifies nodes as Types A-E based on
upstream conditions. The upstream conditions are
exhaustive and mutually exclusive, meaning each node is a candidate of
exactly one of the types. For each type, we show that if the
node is valid, then updating it preserves consistency. 
\begin{compactitem}
  \item Type A: no upstream paths incoming to node $s$ in
    $C_c$. Type A candidate nodes are also called a \emph{disconnected}
    nodes. Updating $s$ does not add downstream maximal paths starting
    from $H_1$ to $C_c$.  So, $\maxpaths(H_1,C_c) =
    \maxpaths(H_1,\upd(C_c,s))$, meaning updating $s$ preserves
    consistency. However, to simplify the presentation,
    Algorithm~\ref{alg:orderupdate} imposes a downstream condition. We
    will show that if a correct sequence exists, then there also
    exists some correct sequence that updates nodes with this optional
    downstream condition ($Z_a$ in Figure~\ref{fig:condtable}). 
  \item Type B: paths to $s$ from $H_1$ in $C_c$, are in both
    $\paths(H_1,s,C_i)$ and $\paths(H_1,s,C_f)$. Downstream paths in
    $\upd(C_c,s)$ from $s$ must be in either $\maxpaths(s,C_i)$ or
    $\maxpaths(s,C_f)$. This $s$ is a Type B valid node in
    Figure~\ref{fig:typeB}, where highlighted edges are in $C_c$. 
  \item Type C: all paths to $s$ from $H_1$ in $C_c$, are
    $\paths(H_1,s,C_f)$. To ensure consistency of $\upd(C_c,n)$,
    downstream maximal paths from $s$ in $\upd(C_c,s)$ must lie in
    $\maxpaths(s,C_f)$. 
  \item Type D: all paths to $s$ from $H_1$ in $C_c$, are
    $\paths(H_1,s,C_i)$. To ensure consistency of $\upd(C_c,n)$,
    downstream maximal paths from $s$ in $\upd(C_c,s)$ must lie in
    $\maxpaths(s,C_i)$. 
  \item Type E: some non-empty set of upstream paths to $s$ in $C_c$,
    are in $\paths(H_1,s,C_f)\setminus\paths(H_1,s,C_i)$, and some
    non-empty set of upstream paths to $s$ are in
    $\paths(H_1,s,C_i)\setminus\paths(H_1,s,C_f)$. This $s$ is a Type E
    valid node in Figure~\ref{fig:typeE}, where highlighted edges are
    in $C_c$. Downstream paths from $s$ in $\upd(C_c,s)$ must be in
    both $\maxpaths(s,C_i)$ and $\maxpaths(s,C_f)$.  
\end{compactitem}\vspace{-\baselineskip} 
}{}{blah}

Using Lemma~\ref{lemma:correctness}, each node updated by
OrderUpdate leads to a valid intermediate configuration. So, we
change from $C_i$ to $C_f$ without
going through an inconsistent state, and since we wait between all 
updates, we obtain a consistent sequence. 
 
\makeproof{\thmcorrectness}{th:correctness}{theoremt}{Theorem}
{%
Any sequence $R$ of nodes produced by Algorithm~\ref{alg:orderupdate}
(using subroutine Algorithm~\ref{alg:pickdef}) is correct. 
}
{
Every node updated by OrderUpdate preserves consistency in the
network. Let a sequence $S=s_1\cdots s_{|N|}$ be generated by
OrderUpdate. Then, using Lemma~\ref{lemma:correctness}, $\forall r \in
[1,|N|] : \upd(C_i,s_1\cdots s_{r-1})$ is consistent. Finally, since all
nodes are updated in $S$, $\upd(C_i,S) = C_f$. So, if a sequence of
updates is generated by Algorithm~\ref{alg:orderupdate} using
subroutine SequentialPickAndWait, it is a correct sequence. 
}{}{}

\cameraversion{{\noindent
The proof of this and other theorems/lemmas are in the
extended version \cite{nilesh}.}}

\subsection{Careful Sequences}
\label{sec:algo:careful}

Previously, we said that Type A candidates (disconnected nodes) do not
require a downstream condition to be updated. However,
Algorithm~\ref{alg:orderupdate} imposes a downstream condition on
disconnected nodes for them to be valid and updated. We refer to
sequences that respect this downstream condition (i.e., update only
valid nodes) as {\em careful}
sequences. Let $s$ be a node and $C$ be a configuration, and define
$\valid_1(C,s)$ to be $\true$ iff $s$ in valid in
configuration $C$. We extend $\valid_1$ to a sequence of nodes by defining
$\valid$ as $\valid(\varepsilon,C) = \true$ (where
$\varepsilon$ is the empty sequence) and $\valid(C,uS) =
\valid(\upd(C,u),S) \land \valid_1(C,u)$.  

\paragraph{Careful Sequence}
A \emph{careful} sequence $T=t_1t_2\cdots t_{|N|}$ is a correct sequence
of nodes s.t. $\forall l \in [1,|N|] :
\valid(\upd(C_i,t_1t_2\cdots t_{l-1}),t_l)$. 

\extendedversion{
Type A candidates do not have to be valid to be updated, but we
enforce the downstream condition for them to be valid. The downstream
condition for a Type A valid node $s$ in Figure~\ref{fig:condtable}
has two clauses: 
\begin{compactitem}
\item The first clause (final-connectivity condition) is true when $s$ is connected in $C_i$, but disconnected in $C_f$. If there are no outgoing $C_f$ edges from $s$ after its update, then it is a node which will be disconnected in $C_f$. After $s$ becomes disconnected,
it remains disconnected, as it has no incoming/outgoing $C_f$ edges, and can be updated.
\item The second clause states that all maximal paths downstream, after update, are in $\maxpaths(s,C_f)$.
This simplifies the proof of claims about correct sequences.
\end{compactitem}

\noindent
We will now prove that if there exists a correct sequence of updates,
then there is also a careful sequence of updates. Before proving this,
we first observe the following properties of correct sequences:

\begin{property}
\label{updateorder}
If we have two sequences $A$ and a permutation $A'$ of $A$ s.t
$\valid(C,A) \land valid(C,A')$, then $\upd(C,A) = \upd(C,A')$. 
\end{property}
\begin{proof}
This is because $A$ and $A'$ both update the same nodes in the graph. Additionally,
the final configuration after both updates has the same edges
regardless of the update order between $A$ and $A'$. 
\end{proof}

\begin{lemmat}
\label{lemma:equiv_main}
Let $T=UnV$ be a correct sequence where $n$ is an invalid Type A
candidate, then $\exists T'=Un'V'$, a correct sequence in which $n'$
is a valid node, and $V'$ is a sequence s.t. $n'V'$ is a permutation
of $nV$. 
\end{lemmat}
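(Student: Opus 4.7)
Plan. Let $C_U = \upd(C_i, U)$; since $T$ is correct, $C_U$ is consistent and every intermediate configuration along $T$ is consistent. Because $n$ is a Type A invalid candidate in $C_U$, the node $n$ is disconnected from $H_1$ in $C_U$, $\out(n, C_f) \neq \emptyset$, and there exists $p \in \maxpaths(n, \upd(C_U, n))$ with $p \notin \maxpaths(n, C_f)$. The crucial preliminary observation I would record is that disconnectedness of $n$ implies $\maxpaths(H_1, C_U) = \maxpaths(H_1, \upd(C_U, n))$: the update of $n$ is invisible from the source, so from the perspective of consistency of $\upd(C_U, n)$ the update of $n$ has no effect. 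In particular, for every other node $v$ reachable from $H_1$ in $C_U$, we have $\paths(H_1, v, C_U) = \paths(H_1, v, \upd(C_U, n))$, so $v$'s upstream type (A--E) is the same in both configurations. Since $n$ is invalid, we are forced to take $n' \neq n$, and hence $n' \in \nodes(V)$.

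My approach is to locate a valid $n' \in \nodes(V)$ at $C_U$ and construct $V'$ by reinserting $n$ back into $V \setminus \{n'\}$ at a position that preserves step-by-step consistency. The natural first candidate is $n' = v_1$: its upstream condition at $C_U$ matches its upstream condition at $\upd(C_U, n)$ by the preliminary observation, so the upstream condition already holds. The downstream condition for $v_1$ at $C_U$ coincides with that at $\upd(C_U, n)$ when $v_1$'s $C_f$-outgoing paths do not pass through $n$; in that case I would take $V'$ to be $v_2 \cdots v_m$ with $n$ reinserted at a position just after the last predecessor (in the $C_f$-subgraph rooted at $n$) that $V'$ has updated, and then verify consistency at each intermediate configuration using Lemma~\ref{lemma:correctness}. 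If $v_1$ is not valid at $C_U$, I would look further along $V$ for the earliest $v_j$ whose update is safe at $C_U$; such a $v_j$ must exist because in $T$ every node of $V$ is eventually updated and all intermediate configurations are consistent, so some node of $V$ has its dependencies already satisfied by $U$ alone. Property~\ref{updateorder} then ensures that the final configuration of $T'$ agrees with the final configuration of $T$, namely $C_f$, so the only remaining obligation is step-by-step consistency.

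The main obstacle is precisely the case where $v_1$'s new $C_f$-outgoing edges pass into $n$'s subgraph, so updating $v_1$ before $n$ would create a fresh $C_f$-path from $H_1$ into $n$ while $n$'s outgoing edges are still $C_i$ -- an immediate violation of per-packet consistency. The technical content of the proof lies in showing that even in this case a valid $n'$ exists deeper in $V$, and that the nodes preceding $n'$ in $V$ (together with $n$) can be interleaved in $V'$ so that no intermediate configuration is inconsistent. I expect this to require a structural exchange argument that tracks which $C_f$-neighbors of $n$ have already been updated as the sequence progresses, combined with the observation that $n$, being dead in $C_U$, remains safe to update at any later step where the downstream condition $Z_a^{\dagger}(n)$ has been restored by the prior updates of $V'$; at worst, $n$ can be placed at the very end of $V'$, where $\upd(C_U, V') = C_f$ and so the second disjunct of $Z_a^{\dagger}$ holds trivially.
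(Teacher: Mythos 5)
There is a genuine gap, and it sits exactly where you flag ``the main obstacle.'' Your fallback---``at worst, $n$ can be placed at the very end of $V'$''---is wrong. Let $v_p$ be the first node of $V$ whose update creates a path from $H_1$ to $n$. If $n$ is deferred past $v_p$, then at the prefix ending with $v_p$ the network contains a $C_f$-only upstream path from $H_1$ into $n$ concatenated with $n$'s stale $\out(n,C_i)$ edges; such a maximal path is in neither $\maxpaths(H_1,C_i)$ nor $\maxpaths(H_1,C_f)$ in general, so that intermediate configuration is inconsistent and the resulting sequence is not correct, even though $Z_a^{\dagger}(n)$ does hold trivially at the very end. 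The paper's proof places $n$ \emph{immediately before} $v_p$ (i.e., $V''=Uv_1\cdots v_{p-1}nv_p\cdots v_k$), keeping the internal order of $V$ completely intact, and extracts the missing downstream condition from correctness of $T$ itself: since $T$ is correct and $v_p$'s update connects $H_1$ to $n$ by a $C_f$-only path, all downstream maximal paths from $n$ in the configuration just before $v_p$ must already lie in $\maxpaths(n,C_f)$, so $n$ is a \emph{valid} Type~A node at that position. It then recurses: the new sequence may begin with another invalid disconnected node ($v_1$), and the process repeats, terminating because there are finitely many invalid disconnected updates in $T$.

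Your alternative route---find the earliest $v_j$ that is valid at $C_U$ and pull it to the front---has a second, independent problem. The existence claim (``some node of $V$ has its dependencies already satisfied by $U$ alone'') is asserted, not proved, and even granted a valid $v_j$, moving it across $v_1\cdots v_{j-1}$ requires an exchange property for update sequences. The paper proves such a property only later (Lemma~\ref{lemma:completeness}) and \emph{only for careful sequences}; the text explicitly notes it fails for arbitrary correct sequences, and careful sequences are precisely what Lemma~\ref{lemma:equiv_main} is being used to construct, so invoking it here would be circular. The paper's transformation avoids both issues by never promoting any $v_j$: it only demotes $n$, so no exchange lemma is needed, and consistency of each prefix of $V''$ follows directly from $\maxpaths(H_1,C_r)=\maxpaths(H_1,C_r')$ while $n$ is disconnected, plus Property~\ref{updateorder} for the tail. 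Your opening observation (the update of a disconnected node is invisible from $H_1$) is correct and is indeed the engine of the paper's argument, but the construction built on it needs to be replaced by the demote-and-recurse scheme above.
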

\begin{proof}
If $n$ is an invalid disconnected node, it was not disconnected in
$C_f$ (final-connectivity condition). Let $v_p$ be the first node in
sequence $V=v_1v_2\cdots v_k$ s.t. there is a path from $H_1$ to $n$ in
$\upd(C_i,Unv_1v_2\cdots v_p)$. Let us consider a sequence $V'' =
Uv_1v_2\cdots v_{p-1}nv_p\cdots v_k$.  
Let us define $\forall r \in [1,p) : C_r=\upd(C_i,Unv_1\cdots v_r)$ and
  $C_r'=\upd(C_i,Uv_1\cdots v_r))$. $\forall r \in [1,p) :
    \maxpaths(H_1,C_r)=\maxpaths(H_1,C_r')$ because there is no path
    from $H_1$ to $n$ in all configurations $C_r$ and $C_r'$. So in
    $V''$, updates of nodes $v_1,v_2,\cdots,v_{p-1}$ lead to consistent
    configurations. In $V''$, $n$ was disconnected before $v_p$ was
    updated, so updating $n$ after $v_{p-1}$ leads to a consistent
    configuration. Finally, from Property~\ref{updateorder}, $\forall
    r \in [p,k] : \upd(C_i,Unv_1\cdots v_r) = \upd(C_i,Uv_1\cdots nv_p\cdots v_r)$,
    so every node after $v_p$ can be updated in $V''$, since it could
    be updated in $T$. 
Let $C_1=\upd(C_i,Unv_1v_2\cdots v_{p-1})$ be the configuration before
updating $v_p$ in $T$. To connect $n$ to $H_1$, the update of $v_p$ when
the network is in configuration $C_1$ will add a $C_f$-only edge upstream
to $n$ and create a $C_f$ path between $v_p$ and $n$. For consistency
with this $C_f$-only edge, in $C_1$, all downstream maximal paths from
$n$ are in $\maxpaths(n,C_f)$. In $C_1$, $n$ satisfies the Type A
downstream condition. 
$C_1=\upd(C_i,Unv_1v_2v_3\cdots v_{p-1}) =\upd(C_i,Uv_1v_2\cdots v_{p-1}n)$,
so in $V''$, $n$ satisfies the downstream condition and is a Type A
valid node. If $V''$ starts with a disconnected invalid node, we
repeat this process until we find $V'''=n'V'$ where $n'$ is a valid
node. We are guaranteed to find $V'''$, because we continue changing
invalid disconnected nodes to valid nodes, and there can be only a
finite number of invalid disconnected nodes in $T$. 
\end{proof}
}

\makeproof{\thmEquivMain}{th:equiv_main}{theoremt}{Theorem}
{
If a correct sequence of updates exists, then a careful
sequence also exists. 
}
{
Let $Q=s_1s_2\cdots s_n$ be a correct update sequence. Let r be the first
index s.t. $\forall i<r:s_i$ is valid and $s_r$ is invalid. Then using
Lemma~\ref{lemma:equiv_main}, there is a sequence
$Q'=s_1's_2'\cdots s_n'$ s.t. $\forall i \le r:s_i'$ is valid. Using this
argument for every index up to $n$, we can find a $Q''$ s.t. $Q''$ is a
careful sequence.
}{}{}

\begin{algorithm}[t]
\DontPrintSemicolon
\footnotesize
\SetKwInOut{Input}{Input}
\Input{Set of all nodes $N$, Initial configuration $C_i$, Final configuration $C_f$}
\KwResult{An consistent order of node updates $R$, Updates before which there are waits $R_w$}
$R = R_w = P_0 \gets \emptyset; k \gets 1$\tcp*{initialize $R$, $R_w$, $P_0$ and $k$}
$C_c \gets C_i$\tcp*{$C_c$ starts with the initial value of $C_i$}
\While(\tcp*[f]{stop when $C_c$ and $C_f$ are equal}){$C_c \neq C_f$}{
	$U \gets \{s \mid s \in N \land((Y_a(s) \land Z_a(s)) \lor (Y_b(s) \land Z_b(s)) \;\lor $ \newline\hphantom{$U_1 \gets $}$(Y_c(s) \land Z_c(s)) \lor (Y_d(s) \land Z_d(s)) \lor (Y_e(s) \land Z_e(s)))\}$\label{alg:line:valid}\tcp*{valid nodes}
	\lIf(\tcp*[f]{no consistent order of updates exists}){$U = \emptyset$}{
		EXIT\label{alg:line:fail}
	}
	$s = PickAndWait()$\label{alg:line:pick}\tcp*{by default, use Algorithm~\ref{alg:pickdef}}
	$C_c \gets (C_c \setminus \out(s,C_i)) \cup \out(s,C_f)$\label{alg:line:remedge}\tcp*{update $C_c$}
	$N \gets N - \{s\}$\tcp*{remove updated nodes from node list}
}
\Return $(R, R_w)$
\caption{$\mathit{OrderUpdate}$}
\label{alg:orderupdate}
\end{algorithm}
\begin{algorithm}[t]
\DontPrintSemicolon
\footnotesize
$s = Pick(U)$\tcp*{pick any valid node}
$R_w \gets R_w.s$\tcp*{by default, there is a wait after every update}
$R \gets R.s$\tcp*{append $s$ to the end of result $R$}
\caption{$\mathit{SequentialPickAndWait}$}
\label{alg:pickdef}
\end{algorithm}
\begin{algorithm}[t]
\DontPrintSemicolon
\footnotesize
\If(\tcp*[f]{we do not need a wait before first node}){$k=1$}{
  $P_0 \gets U$\tcp*{all nodes initially valid are $P_0$}
}
\If(\tcp*[f]{we have to pick a lower priority node}){$P_0 = \emptyset$}{
  $P_0 \gets U$\tcp*{all nodes in $U$ become $P_0$ after waiting.}
  $s = Pick(P_0)$; $R \gets R.s$; $R_w \gets R_w.s$; $k \gets k +1$;\tcp*{pick $P_0$ node, append $s$ to result $R$, add wait, increment number of rounds $k$}
}
\Else{
  $s = Pick(P_0)$; $R \gets R.s$\tcp*{pick any $P_0$ node, add $s$ to result $R$}
}
\caption{$\mathit{OptimalPickAndWait}$}
\label{alg:pickandwait}
\end{algorithm}

\subsection{Completeness of the OrderUpdate Algorithm}
The OrderUpdate Algorithm (with the 
SequentialPickAndWait subroutine) is complete, i.e., if there
exists any correct sequence, we find one. 
We can observe that if two
nodes $a$ and $b$ are both valid in configuration $C_c$, then
$\upd(C_c,ab)$ and $\upd(C_c,ba)$ are both consistent
configurations. This property holds for any number of nodes and for all
{\em careful} sequences, but not for all {\em correct} sequences. We prove
this behavior in the following lemma, which is the key to observe
completeness of OrderUpdate Algorithm. 

\makeproof{\lemmacompleteness}{lemma:completeness}{lemmat}{Lemma}
{
If $T = UVnY$ is a careful sequence, and $\valid(\upd(C_i,U),n)$, then
$T' = UnVY$ is also careful.
}
{
\begin{figure}
\centering
\begin{minipage}{.5\textwidth}
  \centering
  \tikzset{ 
	VertexStyle/.append style = { minimum size = 12pt, inner sep = 0pt, font =\tiny, color=black }
	}
  \begin{tikzpicture}
  \SetGraphColor{white}{blue}{magenta}
  \Vertex[L=$H_1$]{H1}
  \EA[L=$v_r$,unit=1.25](H1){B}
  \EA[L=$ $,unit=1.25](B){D}
  \NOEA[L=$n$,unit=1.25](H1){A}
  \SOEA[L=$ $,unit=1.25](H1){C}

  \draw[dashed,-triangle 45](H1.20) -- (B.160);
  \draw[-triangle 45](H1.-20) -- (B.-160);
  
  \draw[dashed,-triangle 45](B.20) -- (D.160);
  \draw[-triangle 45](B.-20) -- (D.-160);
  
  \draw[dashed,-triangle 45](H1) -- (A);
  \draw[dashed,-triangle 45](A) -- (B);
  \draw[dashed,-triangle 45](B) -- (C);
  
  \end{tikzpicture}
  \caption{Lemma~\ref{lemma:completeness} Case 1.}
  \label{fig:case1}
\end{minipage}%
\begin{minipage}{.5\textwidth}
\centering
  \centering
  \tikzset{ 
	VertexStyle/.append style = { minimum size = 12pt, inner sep = 0pt, font =\tiny, color=black }
	}
  \begin{tikzpicture}
  \SetGraphColor{white}{blue}{magenta}
  \Vertex[L=$H_1$]{H1}
  \EA[L=$n$,unit=1.25](H1){B}
  \EA[L=$ $,unit=1.25](B){D}
  \NOEA[L=$v_r$,unit=1.25](H1){A}
  \SOEA[L=$ $,unit=1.25](H1){C}
  \EA[L=$ $,unit=1.25](C){E}

  \draw[dashed,-triangle 45](H1.20) -- (B.160);
  \draw[-triangle 45](H1.-20) -- (B.-160);
  
  \draw[dashed,-triangle 45](B.20) -- (D.160);
  \draw[-triangle 45](B.-20) -- (D.-160);
  
  \draw[dashed,-triangle 45](C) -- (E);
  \draw[dashed,-triangle 45](B) -- (C);
  \draw[dashed,-triangle 45](H1) -- (A);
  \draw[dashed,-triangle 45](A) -- (B);

  \end{tikzpicture}
  \caption{Lemma~\ref{lemma:completeness} Case 2.}
  \label{fig:case2}
\end{minipage}%
\end{figure}
Let $V=v_1\cdots v_k$, then $\forall r \in [1,k]:
C_r=upd(C_i,Uv_1\cdots v_{r})$ and $C_r'=upd(C_i,Unv_1\cdots v_{r})$ are the
configurations after updating $v_r$ in $T$ and $T'$ respectively. 
We will argue for each node $v_r$ in $V$, that $C_r'$ is
consistent. It is trivial to see that $\forall p \in \maxpaths(H_1,C_r') \cap
\maxpaths(H_1,C_r): p$ is consistent.
So, we only need to prove that $\forall p \in \maxpaths(H_1,C_r')
\setminus \maxpaths(H_1,C_r): p$ is consistent. Each $v_r$ can be classified into one of several
types based on maximal paths in $\maxpaths(H_1,C_r') 
\setminus \maxpaths(H_1,C_r)$.
\begin{compactitem}
  \item $\text{Case~1:}~\exists p \in \paths(H_1,v_r,C_r'): p \not\in
    \paths(H_1,v_r,C_r) \land \lnot(\exists p \in \maxpaths(v_r,C_r'):
    p \not\in \maxpaths(v_r,C_r))$. See Figure~\ref{fig:case1}. There are upstream paths to $v_r$
    in $C_r'$ not present in $C_r$. No downstream maximal
    paths from $v_r$ were added in $C_r'$. Consider sets of paths
    in $C_r'$ touching $v_r$: 
    \begin{compactenum}
      \item{$\up$} $= \paths(H_1,v_r,C_r)$ -- set of upstream paths to
        $v_r$ in $C_r$. 
      \item{$\up'$} $= \{p \mid p \in \paths(H_1,v_r,C_r'): p \not\in
        C_r\}$ -- set of upstream paths to $v_r$ in $C_r'$ which are
        not in $C_r$. Updating a node adds $C_f$-only edge(s) to the
        network, so for any path $p$ containing any of these edges $p
        \in C_f \land p \not\in C_i$. Hence, $\forall p \in up': p \in
        C_f \land p \not\in C_i$. 
      \item{$\down$} $= \maxpaths(v_r,C_r') \subseteq \maxpaths(v_r,C_r')$ --
        set of downstream paths from $v_r$ in $C_r'$.
    \end{compactenum}
    Let us define the $\cdot$ operator on two sets of paths $S$ and
    $S'$. We use $S \cdot S'$ to mean the set of all paths formed by the
    concatenation of any two paths $p \in S$ and $p' \in S'$ s.t. $p'$
    starts at the same node where $p$ ends. 
    All paths in $\maxpaths(H_1,C_r) \supseteq \up\cdot\down$ are
    consistent. 
    \begin{equation}
    \label{c11}
    \forall p \in (\up\cdot\down) : p \in \maxpaths(H_1,C_i) \lor p
    \in \maxpaths(H_1,C_f) 
    \end{equation}
    Let us partition $\down$ into $\down_1$ and $\down_2$. The set $\down_1$
    contains downstream maximal paths from $v_r$ that existed in
    $C_{r-1}'$ and $\down_2 = \down \setminus \down_1$. We inductively
    assume $\maxpaths(H_1,C_{r-1}') \allowbreak \supseteq \allowbreak (\up \cup \up')
    \cdot \down_1$ is consistent. 
    \begin{equation}
    \label{c12}
    \forall p \in (\up'\cdot\down_1) :  p \in \maxpaths(H_1,C_i) \lor
    p \in \maxpaths(H_1,C_f) 
    \end{equation}
    We know $down_2 \in \maxpaths(v_r,C_f)$ since they were added by some
    update. Paths in $up'$ are $C_f$ paths. 
    \begin{equation}
    \label{c13}
    \forall p \in (\up'\cdot\down_2) :  p \in \maxpaths(H_1,C_f)
    \end{equation}
    From Equations~\ref{c11},~\ref{c12}, and~\ref{c13}, we conclude that:
    $$\forall p \in ((\up' \cup \up) \cdot \down) :  p \in
    \maxpaths(H_1,C_i) \lor p \in \maxpaths(H_1,C_f)$$ 
    Thus, $C_r'$ is consistent, since all maximal paths from $H_1$ that
    touch $v_r$ are consistent.
  \item $\text{Case~2:}~\lnot(\exists p \in \paths(H_1,v_r,C_r'): p \not\in
    \paths(H_1,v_r,C_r)) \land (\exists p \in \maxpaths(v_r,C_r'): p
    \not\in \maxpaths(v_r,C_r))$. See Figure~\ref{fig:case2}. There are downstream maximal paths
    from $v_r$ in $C_r'$ which were not present in $C_r$. No upstream
    paths to $v_r$ were added. Similar to the previous case, let us
    define three sets of paths in $C_r'$ that touch $v_r$: 
    \begin{compactenum}
      \item{$\down$} $= \maxpaths(v_r,C_r)$ -- set of downstream paths in $C_r$.
      \item{$\down'$} $= \{p \mid p \in \maxpaths(v_r,C_r'): p \not\in
        C_r\}$ -- set of downstream maximal paths from $v_r$ 
        not present in $C_r$ but are present in $C_r'$. Similar to
        $up'$ in $\mathit{Case~1}$, $\forall p \in \down': p \in C_f \land p
        \not\in C_i$. 
      \item{$up$} $= \paths(H_1,v_r,C_r') \subseteq \paths(H_1,v_r,C_r)$ -- set
        of upstream paths to $v_r$ in $C_r$.
    \end{compactenum}
    We know that $\maxpaths(H_1,C_r) \supseteq \up \cdot \down$ is a consistent
    configuration, so Equation~\ref{c11} holds. 
    Since updating $n$ made changes to the downstream paths from
    $v_r$, node $n$ lies on a downstream maximal path from $v_r$. Also,  
    $\forall p \in \paths(v_r,n,C_r') : p \in C_f$, because if $v_r$
    and $n$ are connected by a path only in $C_i$, then updating $n$
    before $v_r$ in $T'$ would not be able to add $C_f$ paths to
    $C_r'$ (due to consistency reasons). This leads to one of two
    cases: 
    \begin{compactitem}
      \item $\forall p \in \paths(v_r,n,C_r') : p \in C_f \land p \in
        C_i$, i.e. $v_r$ and $n$ were connected from the start. Since all
        paths in $\down'$ touch $n$ ($C_r$ and $C_r'$ were different
        because $n$ was updated in $C_r'$), the update of $v_r$ in
        $C_{r-1}$ does not add any paths to
        $\down'$. $\forall p \in \down': p \in
        \maxpaths(v_r,C_{r-1})$. Configuration $C_{r-1}$ is
        consistent and $\maxpaths(H_1,C_{r-1}) \supseteq up \cdot
        down'$, $\forall p \in (up \cdot down'): p$ was consistent. 
      \item $\exists p \in \paths(v_r,n,C_r') : p \in C_f \land p
        \not\in C_i$, i.e. $v_r$ and $n$ are connected by a $C_f$-only
        path. This path existed in $C_r$, so paths in $up$ can exist
        in a consistent configuration with downstream maximal
        $C_f$-only paths. Paths in $up$ can exists with paths in
        $down'$ in a consistent configuration.     
    \end{compactitem}
    \begin{equation}
    \label{c21}
    \forall p \in (\up \cup \down'): p \in \maxpaths(H_1,C_i) \lor p
    \in \maxpaths(H_1,C_f) 
    \end{equation}
    From Equation~\ref{c11} and Equation~\ref{c21}:
    $\forall p \in \maxpaths(H_1, C_r' = \up \cup \down \cup \down')
    :  p \in \maxpaths(H_1,C_i) \lor p \in \maxpaths(H_1,C_f)$, meaning 
    $C_r'$ is a consistent state and $v_r$ can be updated.
  \item $\text{Case~3:}~\exists p \in \maxpaths(v_r,C_r'): p \not\in
    \maxpaths(v_r,C_r) \land \exists p \in \paths(H_1,v_r,C_r'): p
    \not\in C_r$, i.e. updating $n$ added some upstream paths to $v_r$ and
    some downstream maximal paths from $v_r$. So, $n$ was both
    upstream to $v_r$ and downstream from $v_r$. This case is not
    possible because updating $n$ does not add any cycles to the
    network. 
  \item $\text{Case~4:}~\not\exists p \in \maxpaths(v_r,C_r'): p \not\in
    \maxpaths(v_r,C_r) \land \not\exists p \in \paths(H_1,v_r,C_r'): p
    \not\in C_r$, i.e. there has been no change in upstream and downstream
    paths. So, $C_r'$ is a consistent state.%
\end{compactitem}
We have seen that every $v_r$ in the sequence $V$ can be updated in
$T'$. Also, using Property~\ref{updateorder}, $upd(C_i,UnV) =
upd(C_i,UVn)$, nodes in $Y$ can be updated in sequence. Hence we
showed that if $T = UVnY$ is a correct careful sequence, $T' = UnVY$
is a correct careful sequence.
}{}{}

Lemma~\ref{lemma:completeness} shows that if there are multiple valid
nodes in some configuration $C$, then these nodes can be updated in
any order. This is because once a node becomes valid, it does not
become invalid. This is why we introduced careful
sequences because this lemma is not true for arbitrary correct
sequences. Using this lemma, we can prove the completeness of
Algorithm~\ref{alg:orderupdate} (with the Algorithm~\ref{alg:pickdef}
subroutine).  

\makeproof{\thmcompleteness}{th:completeness}{theoremt}{Theorem}
{
Algorithm~\ref{alg:orderupdate}, using subroutine
Algorithm~\ref{alg:pickdef}, generates a correct order of updates
$R$ if there exists one, or fails (in Line~\ref{alg:line:fail}) if such an order does not exist. 
}
{
We proved the correctness of Algorithm~\ref{alg:orderupdate}, using
subroutine SequentialPickAndWait, in
Theorem~\ref{th:correctness}. So we know that if it generates an
order of updates, it is correct. 

Let us consider the case where a correct sequence of updates exists
but Algorithm~\ref{alg:orderupdate} fails. Let $Q_{\mathit{careless}}$
be the correct sequence of updates, and $Q_{alg}=a_1a_2\cdots a_k$ be
the sequence of nodes updated by Algorithm~\ref{alg:orderupdate}
before it fails. Using Theorem~\ref{th:equiv_main}, let
$Q_{\mathit{careful}}=s_1s_2\cdots s_n$ be a careful sequence. Let r be
the first index s.t. $\forall i<r: s_i=a_i \land s_r \neq a_i$. If
$r<k$, then using Lemma~\ref{lemma:completeness}, there is another
careful sequence $Q_{\mathit{careful}}' = s_1's_2'\cdots s_n'$
s.t. $\forall i \leq r: s_i'=a_i$. Using this argument for every index
up to $k$, we can find a correct careful sequence
$Q_{\mathit{careful}}''$ s.t. $Q_{alg}$ is a prefix sequence of
$Q_{\mathit{careful}}''$. So, there is a correct node after nodes in
$Q_{alg}$ were updated and Algorithm~\ref{alg:orderupdate} could not
have failed. Therefore, if Algorithm~\ref{alg:orderupdate} fails, then
no correct sequence of updates exists. 
}{}{}
\paragraph{Running Time.}  
Let $|V|$ be the number of nodes and $|E|$ be the number of edges in
$G$.  In each iteration of its outer loop,
Algorithm~\ref{alg:orderupdate} using $\mathit{SequentialPickAndWait}$ (Algorithm~\ref{alg:pickdef}) as a
subroutine, makes a list of valid nodes and picks one to update.  The
set of valid nodes $U$ in Line~\ref{alg:line:valid} can be found using a graph search on $C_c$ for each node, which takes $O(|V|(|V| + |E|))$ steps.
The loop runs $|V|$ times and updates each node, so the overall runtime is $O(|V|^2(|V|+|E|))$. This analysis relies on the fact that the graph search is implemented in a way that goes through each edge and node a constant number of times. Once a node has been visited, it is marked $F$, $I$, or $B$, based on whether the maximal paths downstream from it are maximal paths starting from it in $C_i$, $C_f$, or both. This would avoid visiting the node (and its outgoing edges) again. 

\section{Optimal OrderUpdate Algorithm}
\label{sec:waits}

Thus far, we solved the consistent order update problem by generating
a consistent sequence with only singleton sets. This corresponds to
requiring a wait at every step of the update sequence, which does not
allow any parallelism. However, we have seen in
Section~\ref{sec:overview} that some nodes can be updated in
parallel. In Section~\ref{sec:model}, we defined when a wait is needed
in the sequence of updates. In this section, we provide a sequence of
updates where there is a wait if and only if it is needed, solving the
optimal version of the problem. We use
Algorithm~\ref{alg:orderupdate}, but replace the subroutine
$\mathit{SequentialPickAndWait}$ (Algorithm~\ref{alg:pickdef}) with
$\mathit{OptimalPickAndWait}$ (Algorithm~\ref{alg:pickandwait}). The
algorithm returns a solution for the optimal consistent update problem
in the following format.

\paragraph{Correct Waited Sequence.}  
A correct waited sequence of updates is a tuple $(T,W)$ of node
sequences without repetition, where $W$ is a subsequence of $T$ and
$(T,W)=(t_1t_2\cdots t_{|N|},w_1w_2\cdots w_{k-1})$, such that a
consistent update sequence $S_1S_2\cdots S_k$ can be formed by taking
$S_1=\{t_1,\cdots,t_m\}$ where $t_{m_1}=w_1$, $\forall i \in (1,k):
S_i =\{t_{l_i},\cdots,t_{m_i}\}$ where $t_{l_i}=w_{i-1}$ and
$t_{m_i}=w_i$, and $S_k=\{t_{l_k},\cdots,t_{|N|}\}$ where
$t_{l_k}=w_{k-1}$.

Intuitively, $T$ specifies a correct sequence of updates, with some
waits, while $W$ specifies the nodes, immediately
before which a wait is placed.  If we simply group the nodes between
$i$-th and $(i+1)$-st waits into a set $S_{i+1}$ we obtain the
consistent update sequence of Section~\ref{sec:model}. Considering
solutions to the problem in the form of a sequence of nodes and waits
simplifies the arguments we use to prove correctness and optimality.

\paragraph{Minimal Correct Waited Sequence.}
A {\em minimal correct waited sequence} is a correct waited sequence
$(T,W)$ such that $|W|$ is minimal.

Since we always pick valid nodes, we need to prove that if there
exists a minimal correct waited sequence, then there exists a minimal
correct waited sequence that updates only valid nodes.

\paragraph{Careful Waited Sequence.}
A {\em careful waited sequence} of updates
$(T,W)=(t_1t_2\cdots t_{|N|},w_1w_2\cdots w_{k-1})$ is a correct waited
sequence s.t. $\forall j \in [1,|N|]: \valid(\upd(C_i,t_1\cdots t_{j-1}),t_j)$
A {\em minimal careful waited sequence} is a careful waited sequence $(T,W)$ s.t. $|W|$ is minimal. 
We prove the following for such sequences.

\extendedversion{
\begin{lemmat}
\label{lemma:equiv_wait}
Let $Z=(UnV,W=w_1\cdots w_k)$ be a correct waited sequence where $n$ is an
invalid disconnected node, then $\exists Z'=(Un'V',W')$, a correct
waited sequence in which $n'$ is a valid node, and $V'$ is a sequence
s.t. $n'V'=\pi(nV)$ and $|W| = |W'|$. 
\end{lemmat}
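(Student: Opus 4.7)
The plan is to mirror the reordering used in the proof of Lemma \ref{lemma:equiv_main} and additionally track the wait positions so that no new waits are introduced. As in that lemma, let $V = v_1 \cdots v_k$ and pick $v_p$ as the first node such that $\upd(C_i, U n v_1 \cdots v_p)$ contains a path from $H_1$ to $n$; such $v_p$ exists because $n$ is not disconnected in $C_f$. Form $T' = U v_1 \cdots v_{p-1} n v_p \cdots v_k$, at which point $n$ becomes a Type A valid node by the same argument as in Lemma \ref{lemma:equiv_main}. If the new first node of $V$ is still an invalid disconnected node after this step, iterate; termination follows because the number of invalid disconnected nodes in the sequence is finite.

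To build $W'$ I would proceed as follows. Let $S_1, \ldots, S_m$ be the rounds of $T$ induced by $W$, with $n \in S_a$ and $v_p \in S_c$ (so $a \leq c$). Define the rounds of $T'$ by $S_a' = S_a \setminus \{n\}$, $S_c' = S_c \cup \{n\}$, and $S_j' = S_j$ for $j \notin \{a,c\}$, and let $W'$ be the corresponding subsequence of $T'$ at round boundaries. A single transformation step relocates $n$ between two rounds, possibly leaving an empty round, which is harmless since condition~3 of a consistent update sequence permits unnecessary waits. Hence $|W'| = |W|$ after each step and after iteration.

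The remaining work is to verify that $(T', W')$ is still a correct waited sequence; conditions 1 and 2 of a consistent update sequence follow from Lemma \ref{lemma:equiv_main}, so the effort concentrates on condition 3. I would check, round by round, that the union of intermediate configurations stays consistent. Rounds with $S_j' = S_j$ are unchanged. For $S_a'$, since $n$ remains disconnected from $H_1$ throughout all intermediate configurations of the original $S_a$, removing $n$'s contribution $\out(C_f, n)$ from the union neither creates nor destroys any $H_1$-rooted maximal path, so the union stays consistent. For $S_c'$, adding $n$ inserts $\out(C_f, n)$ into the union; by Lemma \ref{lemma:equiv_main}'s downstream analysis at $n$'s new position, every downstream maximal path from $n$ lies in $\maxpaths(n, C_f)$, while any upstream path to $n$ in the union must use a newly installed $C_f$-edge of $v_p$. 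Thus any new $H_1$-rooted maximal path concatenates two $C_f$ pieces, hence belongs to $\maxpaths(H_1, C_f)$ and is consistent.

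The main obstacle is the round-level analysis for $S_c'$: this round may contain several updates besides $n$ and $v_p$, and its union of configurations includes both $n$'s old outgoing edges (still present in the pre-round configuration, since $n$ has not yet been updated in $T'$) and $n$'s new outgoing edges (added by $n$'s update within the round). Decomposing paths in this augmented union and matching each one to $\maxpaths(H_1, C_i)$ or $\maxpaths(H_1, C_f)$, in the style of the Case~1/Case~2 argument of Lemma \ref{lemma:completeness}, is the technical crux.
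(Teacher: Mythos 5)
Your setup matches the paper's proof: the same relocation of $n$ to immediately before $v_p$ (inherited from Lemma~\ref{lemma:equiv_main}), and the same wait bookkeeping, moving $n$ into $v_p$'s round so that $|W'|=|W|$. But there is a genuine gap, and it sits exactly where the paper's proof does its real work. First, a smaller error: your claim that ``rounds with $S_j'=S_j$ are unchanged'' is false for the rounds strictly between $S_a$ and $S_c$ --- the node sets coincide, but in $Z'$ the configurations still contain $\out(n,C_i)$ rather than $\out(n,C_f)$ because $n$'s update has been deferred, so the inter-wait unions differ by $\out(n,C_i)\setminus\out(n,C_f)$ and $\out(n,C_f)\setminus\out(n,C_i)$. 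This is repairable by the same disconnectedness argument you apply to $S_a'$ (no $H_1$-rooted maximal path reaches $n$ before $v_p$ is updated), and the paper handles it explicitly as a case in its analysis.

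The serious gap is the round $S_c'$. Your ``two $C_f$ pieces'' argument only accounts for paths leaving $n$ via its \emph{new} edges $\out(n,C_f)$ --- but those concatenations already appear in the corresponding union of $Z$ (where $n$ was updated earlier), so they are not the new content at all. The genuinely new paths in $Z'$'s union are those of the form $H_1 \leadsto v_p \rightarrow \cdots \rightarrow n$ followed by a \emph{residual old} edge in $\out(n,C_i)\setminus\out(n,C_f)$, since the union of the round includes configurations from before $n$'s in-round update. Such a path mixes a $C_f$-only upstream segment with a $C_i$-only continuation and would be inconsistent; your proposal explicitly concedes this case as ``the technical crux'' without resolving it, which means the proof is incomplete precisely at its load-bearing step. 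The paper closes it with a dedicated argument you are missing: in this scenario it shows $\out(n,C_i) \subseteq \out(n,C_f)$ --- the $C_f$ upstream path to $n$ created by $v_p$'s update, combined with the consistency requirement on the inter-wait union, forces every downstream maximal path from $n$ appearing in that union to lie in $\maxpaths(n,C_f)$, and since $n$'s old out-edges are present in the union this forces the containment, making the problematic difference set empty (so $S=S'$). The paper also separately verifies that no wait is needed before $n$ itself at its new position, and treats the case where $Z$ had a wait before $n$ by exchanging $n$ with $v_1$; your round formulation absorbs the latter more cleanly, but the containment argument for $S_c'$ cannot be skipped.
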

\begin{proof}
To prove Lemma~\ref{lemma:equiv_wait}, we use the same transformation
as Lemma~\ref{lemma:equiv_main} and update $n$ immediately before $v_p$, the
node that connects it to the network, in a waited sequence $Z'=
(V'',W')$, where $V''=Uv_1v_2\cdots v_{p-1}nv_p\cdots v_k$, and prove that
$|W| = |W'|$. 

Let us consider the case where there was no wait before $n$ in $Z$, i.e.
$n$ was not in sequence $W$. 
For each node $s \neq n$, let $C_s$ and $C_s'$ be configurations after
updating $s$ in $Z$ and $Z'$ respectively. For any node $s \neq n$,
let $r$ be the latest node updated before $s$ in $Z$ which had a wait
before it ($r$ is the last node in $W$). Let us form two unions 
$S = C_r \cup\cdots \cup C_s$ and $S' =
C_r' \cup \cdots \cup C_s'$, consisting of unions of all intermediate
configurations between $r$ and $s$ in $Z$ and $Z'$. 

\begin{compactitem}
  \item Node $s$ was updated before $n$ in $Z$. In this case $S=S'$ as
    there was no change in updates before $n$ in $Z'$. Since $S=S'$,
    no wait is required before $s$ in $Z'$ if no wait was required in
    $Z$. 
  \item Node $s$ was updated between $n$ and $v_p$ in $Z$. In $Z'$, $n$
    was not updated. There are two subcases: 
  \begin{compactitem}
    \item Node $r$ was updated after $n$ in $Z$. For this subcase 
    $S'\setminus S=\out(n,C_i)\setminus\out(n,C_f)$. 
    However, since $n$ was
      disconnected in all configurations between $C_r$ and $C_s$,
      consistency of $S'$ is not affected by these edges, as there are
      no maximal paths from $H_1$ that go through $n$. Hence $S'$ is
      consistent if $S$ is consistent. 
    \item Node $r$ was updated before $n$ in $Z$. For this subcase,
    $S'$ had only edges from $\out(s,C_i)$. Additionally, $S$ had edges from both
    $out(s,C_i)$ and $\out(s,C_f)$. So, $S'\setminus S=\emptyset$.
    $S'$ is consistent if $S$ is consistent. 
  \end{compactitem}
  In both subcases, no additional waits are required before $s$ in
  $Z'$. 
  \item We have $s = v_p$, or $s$ was updated after $v_p$. There are again two
    subcases here: 
  \begin{compactitem}
    \item Node $r$ was updated before $v_p$ in $Z$. In this subcase, 
    $S' \setminus S = \out(s,C_i)\setminus\out(s,C_f)$. Let us consider
    $C_1=\upd(C_i,Uv_1\cdots v_{p-1}n)$ and $C_2=\upd(C_i,Uv_1\cdots v_{p-1}nv_p)$.
    Configuration $C_2$ adds a $C_i$ path $p$ from $v_p$ to $n$ which was not present in $C_1$.
    Since there was no wait between $n$ and $v_p$, $C_1 \cup C_2$ in consistent.
    So, because there was $C_f$ upstream path from $H_1$ to $n$ in $C_2$, $C_1$ had downstream 
    maximal paths from $n$ which were all in $C_f$. However, $C_1$ had paths in $out(n,C_i)$.
    This is only possible if $out(n,C_i) \subseteq out(n,C_f)$. So,
    $\out(s,C_i)\setminus\out(s,C_f) = \emptyset$ and $S= S'$. $S'$ is consistent if $S$ is consistent.
    \item We have $r=v_p$, or $r$ was updated after $v_p$ in $Z$. In this case,
      $S = S'$ because $\forall j>p :
      C_j = C_j'$. So, $S'$ is consistent if $S$ is consistent. 
  \end{compactitem}
\end{compactitem}
We argued for all $s \neq n$ that the waits do not move.
Now, let us argue for $n$.
Let $m$ be the latest node before $n$ s.t. for some $j$, $w_j=m$. Then
two cases are possible: 
\begin{compactitem}
  \item In $Z$, no node in the sequence $v_1\cdots v_{p-1}$ is in $W$. Let
    $C_m$ be the configuration before updating $m$ in $Z$.  
  Since there was no wait before $n$ in $Z$, we know that $S = C_m
  \cup \cdots \cup \upd(C_i,U) \cup \upd(C_i,Un)$ is consistent. We proved 
  that waits in $Z'$ for nodes $s\neq n$ are required at the same location as $Z$.
  So, $S' = C_m \cup \cdots \cup \upd(C_i,U)\cup
  \upd(C_i,Uv_1)\cup\cdots\cup\upd(C_i,Uv_1\cdots v_{p-1})$ is consistent. Let us
  consider $S'' = S' \cup \upd(C_i,Uv_1\cdots v_{p-1}n)$. If there were
  any inconsistent paths in $S''$, they were also a part of $S$ (since
  $n$ is not connected to $H_1$ in any configuration
  $\upd(C_i,Uv_1\cdots v_l)$ where $l<p$). So, there is no wait needed
  before $n$. 
  \item In $Z$, $\exists r \in [1,p)$ s.t. $v_{r}$ is in $W$. Let $q$
    be the greatest index for which $v_q$ satisfies this
    condition. Consider $S = \upd(C_i,Unv_1\cdots v_q)\cup \cdots \cup
    \upd(C_i,Unv_1\cdots v_{p-1})$ and $S'= \upd(C_i,Uv_1\cdots v_q)\cup
    \cdots \cup \upd(C_i,Uv_1\cdots v_{p-1}) \cup
    \upd(C_i,Uv_1\cdots v_{p-1}n)$. We proved 
  that waits in $Z'$ for nodes $s\neq n$ are required at the same location as $Z$. So, in $Z'$,
      $v_q$ was the latest node in $V$ before which there was a
      wait. Then, maximal paths from $H_1$ in both $S$ and $S'$ are
      the same, since $n$ was not connected to $H_1$ before $v_p$ is updated.
      So there is no wait needed before $n$.
\end{compactitem}

\noindent
In case there was a wait before $n$ in $Z$, we consider a sequence
$Z''=(Uv_1nv_2\cdots v_kY,W'')$. In $Z''$ there is a wait before $v_1$ but
not before $n$. This is because $n$ adds edges that are disconnected
from the network. So, there is no requirement for a wait between $v_1$
and $n$. For $Z''$, this becomes the case with no wait
before $n$. 
\end{proof}
}

\makeproof{\theoremequivwait}{th:equiv_wait}{theoremt}{Theorem}
{
If a minimal correct waited sequence exists, then a minimal careful sequence exists as well.
}{The proof uses Lemma~\ref{lemma:equiv_wait} and is similar to the proof of Theorem~\ref{th:equiv_main}.
}{}{}

\subsection{Condition for Waits}

\paragraph{Partial Careful Waited Sequence.}
Given careful waited sequence $Z=(T=t_1\cdots t_{|N|},W=w_1\cdots w_{k-1})$, a
partial careful waited sequence is $Z'=(T'=t_1\cdots t_r,W'=w_1\cdots w_s)$
such that $T'$ is a prefix of $T$ and $W'$ is a prefix of $W$.  The update
mechanism starts with a  
partial careful waited sequence with no nodes
and at every step, it adds a node in a way that ensures that the
obtained sequence is a partial careful waited sequence, i.e., it can
be extended to a careful waited sequence. 

\paragraph{Wait Condition.}
Let us define a function $\wait$ that
takes a partial careful waited sequence $S = (t_1t_2\cdots t_r,\allowbreak w_1w_2\cdots w_s)$ 
and node $n$ s.t. $\valid(C_i,Ut_1\cdots t_r)$ as an argument and returns 
$\true$ if there needs to be a wait before its update. 
It is defined as follows: 
$\wait(n,S)=\true$ iff node $\exists x \in [1,r]: \lnot\valid(\upd(C_i,t_1\cdots t_x),n) \land \lnot(\exists y \in [1,s], \exists z \in (x,r]: w_y = t_z)$.
In other words, in the partial careful waited sequence, there
must be a wait before updating a valid node $n$ if and only if it was
not valid until its dependencies were updated, and there was no wait
after their update. If this is true, then $n$ must be updated in a new
round, after a wait.

The following shows {\em completeness} of the wait condition,
i.e., if a wait is needed (as defined in
Section~\ref{sec:model}) after updating $S$ and
before updating $n$, then $\wait(n,S)$ is true. 

\makeproof{\lemmawaitcond}{lemma:waitcond}{lemmat}{Lemma}
{
If (1) $n$ is the node picked for update,
and (2) the partial careful waited sequence built before updating $n$ is
$S=(t_1t_2\cdots t_r,w_1w_2\cdots w_s)$, and (3) $w_s = t_y$ for some $y \in
[1,r]$, and (4) we define $\forall x \in
[1,r]:C_{t_x}=\upd(C_i,t_1\cdots t_x)$, and then $\wait(n,S)
\leftrightarrow C_{t_y} \cup \cdots \cup C_{t_r} \cup \upd(C_{t_r},n)$ is 
inconsistent.
}
{
Let us first prove that $\wait(n,S) \rightarrow C_{t_y} \cup \cdots \cup C_{t_r} \cup \upd(C_{t_r},n)$ is inconsistent.
For some $a > y$, let $C_{t_a}$ be the configuration of the network in which $n$ was invalid. We know $t_a$ was updated after $t_y$, so there was no wait between the update of $t_a$ and $t_r$. Updating $n$ in $C_{t_a}$ would lead to a inconsistent configuration $C_{t_a}'=\upd(C_{t_a},n)= (C_{t_a} \setminus \out(n,C_i)) \cup \out(n,C_f)$. Now, $C_{t_a} \setminus \out(n,C_i) \subset C_{t_a}$ and $\out(n,C_f) \subset \upd(C_{t_r},n)$. Therefore, $(C_{t_a} \setminus \out(b,C_i)) \cup \out(b,C_f) = C_{t_a}' \subseteq C_{t_a} \cup \upd(C_{t_r},n)$ . Therefore, if $\wait(n,S) = \false$, then $C_{t_a} \cup \upd(C_{t_r},n)$ cannot be consistent.

Now let us prove that $\lnot\wait(n,S) \rightarrow C_{t_y} \cup \cdots \cup C_{t_r} \cup \upd(C_{t_r},n)$ is consistent.
Since $\wait(n,S)=\false$, there are no waits between $t_y$ and $t_r$, $n$ was valid in every configuration reached between the updates of $t_y$ and $t_r$. This means $\forall z \in [y,r]: \upd(C_{t_z},n)$ is consistent.
Also $W = C_{t_y} \cup \cdots \cup C_{t_r}$ is consistent. Let us assume that $W' = C_{t_y} \cup \cdots \cup C_{t_r} \cup \upd(C_{t_r},n)$ is inconsistent. Then there is an inconsistent path in $W'$. However, since $W$ was consistent, this path was not from the union of configurations in $W$. 
So, this path had edges from set $W' \setminus W=\out(n,C_f) \setminus \out(n,C_i)$. Let us form the set $\mathit{add}(t_l)=\out(t_l,C_i)\setminus \out(t_l,C_f)$ which is the set of edges that are added to $C_{t_l}$ after its update. Consider these cases for each inconsistent path $p$ in $W'$.
\begin{compactitem}
  \item $p$ has no edges from $\mathit{add}(t_l)$ for any $t_l$, so $\upd(C_{t_y},n)$
  is inconsistent (impossible). %
  \item $p$ has edges from sets $\mathit{add}(t_{l_1}) \cup \cdots \cup\mathit{add}(t_{l_z})$ for some nodes $t_{l_1}\cdots t_{l_z}$ between $t_y$ and $t_r$ (inclusive), then let $t_{l_g}$ be the  node in set $\{t_{l_1},\cdots,t_{l_z}\}$ that occurs latest in sequence $t_1\cdots t_r$. So, $p$ existed in $\upd(C_{t_g},n)$. However, since we know that $\upd(C_{t_g},n)$ is consistent, this condition is also impossible.
\end{compactitem}
Using this argument for every inconsistent path in $W'$, we prove $W$ is consistent.
So, we have proved that the wait condition defined by function $\wait$ is complete.
}{}{}

\subsection{Algorithm for Optimal Consistent Order Updates}
We now present the $\mathit{OptimalPickAndWait}$
(Algorithm~\ref{alg:pickandwait}) subroutine, that minimizes the
number of waits, solving the optimal consistent update
problem. Our strategy for minimizing waits is to assign one of two
priorities to nodes: $P_0$ (higher priority) and $P_1$ (lower
priority). Let $S$ be a partial sequence. A node is in $P_0$ iff
$\lnot\wait(n,S)$, i.e. $P_0$ nodes do not require waiting before
update. A node is in $P_1$ iff $\wait(n,S)$, i.e. we must wait before
updating a $P_1$ node. We greedily update $P_0$ nodes first.

Correctness and optimality %
follow from
the correctness argument in the previous section, and from
Lemma~\ref{lemma:waitcond}.  Intuitively, updating a node in $P_0$
which does not need a wait allows the $P_1$ list to build up. This
means we need to place a single wait for as many $P_1$ nodes as
possible. When we place a wait in the partial careful waited sequence,
every valid node that was in $P_1$ moves to $P_0$. The last key
property needed for the following theorems is that 
once a node acquires priority $P_0$, it remains in $P_0$. 

\extendedversion{
\begin{lemmat}
\label{lemma:stays_valid}
If a node $n$ is valid in configuration $C$, then it is valid in configuration $\upd(C,n')$ for some valid node $n' \neq n$.
\end{lemmat}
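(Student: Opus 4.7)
The plan is to proceed by case analysis on the type (A through E) of the valid node $n$ in the current consistent configuration $C$, showing in each case that after updating any other valid node $n'$, node $n$ still satisfies the conjunction $Y_\ast \land Z_\ast$ for some (possibly different) type from Figure~\ref{fig:condtable}. First I would observe that by Lemma~\ref{lemma:correctness}, $C' := \upd(C,n')$ is again consistent. The update of $n'$ replaces $\out(n',C_i)$ with $\out(n',C_f)$ and changes nothing else, so $\paths(H_1,n,\cdot)$ can change only if $n'$ lies upstream of $n$, and $\maxpaths(n,\cdot)$ can change only if $n'$ lies downstream of $n$. By acyclicity of $C'$, these two situations are mutually exclusive.

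For the upstream side, assume $n'$ is upstream of $n$. Every path in $\paths(H_1,n,C') \setminus \paths(H_1,n,C)$ traverses a $C_f$-only outgoing edge of $n'$; by consistency of $C'$ its maximal extension lies entirely in $C_f$, so the path itself belongs to $\paths(H_1,n,C_f)$. Symmetrically, every path removed from $\paths(H_1,n,C)$ went through a $C_i$-only outgoing edge of $n'$ and was a pure $C_i$ path. With this accounting I would verify each transition: a Type A node either stays disconnected or becomes Type C (all new upstream paths are $C_f$); a Type B node stays Type B or becomes Type C; a Type C node remains Type C; a Type D node stays Type D or becomes Type E; a Type E node stays Type E, or collapses to Type C or Type D depending on which of its two witness path-sets survives.

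For the downstream side, assume instead that $n'$ is downstream of $n$. By Property~\ref{updateorder}, $\upd(C',n) = \upd(\upd(C,n),n')$, since the operands coincide. Validity of $n$ in $C$ gives, via Lemma~\ref{lemma:correctness}, that $\upd(C,n)$ is consistent, and the appropriate $Z_\ast$ already confines $\maxpaths(n,\upd(C,n))$ to $\maxpaths(n,C_i)$ or $\maxpaths(n,C_f)$. Updating $n'$ can only reroute the $n'$-suffixes of these paths through $\out(n',C_f)\setminus\out(n',C_i)$; by consistency of $\upd(\upd(C,n),n')$, any such rerouted suffix from $n'$ is entirely in $C_f$, while the prefix from $n$ to $n'$ is unchanged. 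I would verify in each type that the concatenation still lies in the required $\maxpaths(n,C_i)$ or $\maxpaths(n,C_f)$ set: for Types C and A this is automatic, for Types B and E the consistency of $\upd(C',n)$ keeps the path within the union, and for Type D the prefix being a $C_i$ path together with the $n'$-suffix lying in $C_f$ forces $\out(n',C_i) \subseteq \out(n',C_f)$ on the relevant edges, preserving $Z_d$.

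The main obstacle will be the Type E case: here both $\paths(H_1,n,C_i)\setminus\paths(H_1,n,C_f)$ and $\paths(H_1,n,C_f)\setminus\paths(H_1,n,C_i)$ must be nonempty within $\paths(H_1,n,C)$, and when $n'$ is upstream of $n$ its update may destroy the $C_i$-only witness and add $C_f$-only paths, so a careful bookkeeping of which witnesses survive is needed to identify the correct target type and to check that the corresponding $Z_\ast$ condition, which was satisfied in $C$, remains satisfied in $C'$ (weakening from $Z_e$ to $Z_c$ or $Z_d$ is monotone in the relevant direction). A secondary subtlety is the two-clause definition of $Z_a$: if $\out(n,C_f) = \emptyset$ then the single-node path at $n$ is automatically in $\maxpaths(n,C_f)$, which gives $Z_c$ for free once $n$ becomes reconnected through $n'$.
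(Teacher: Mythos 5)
Your route---re-deriving a one-step swap directly, by case analysis on the types of Figure~\ref{fig:condtable}---is genuinely different from the paper's, which gets this lemma in two lines as a corollary of Lemma~\ref{lemma:completeness}: instantiate that lemma on a careful sequence $T=Unn'V$ with the block $V$ taken to be the single node $n$, so that $T'=Un'nV$ is careful and hence $\valid(\upd(C_i,Un'),n)$. The trouble is that your execution contains a circularity at its load-bearing step. In the downstream case you justify that rerouted suffixes through $n'$ lie in $C_f$ ``by consistency of $\upd(\upd(C,n),n')$,'' but that consistency is not available: by Lemma~\ref{lemma:correctness} it would follow either from $n$ being valid in $C'=\upd(C,n')$ --- the conclusion you are proving --- or from $n'$ remaining valid in $\upd(C,n)$ --- the very lemma with the roles of $n$ and $n'$ exchanged. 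The hypotheses give consistency only of $C$, $\upd(C,n)$, and $\upd(C,n')$, never of the doubly-updated configuration. The same defect infects your mutual-exclusivity step: acyclicity of $C'$ makes ``$n'$ upstream of $n$'' and ``$n'$ downstream of $n$'' exclusive \emph{in $C'$}, but the $Z$-conditions live in $\upd(C',n)$, where $n$'s new $C_f$ edges could a priori put $n'$ downstream of $n$ even though it is upstream in $C'$; ruling out the resulting mixed cycle again needs acyclicity (i.e., consistency) of $\upd(C',n)$, which you do not have. The paper escapes exactly this trap because Lemma~\ref{lemma:completeness} is proved \emph{inside} an assumed careful sequence $T=UVnY$, where consistency of every intermediate configuration $C_r$ is a hypothesis --- that is what powers its Case~2 claim that $\paths(v_r,n,C_r')\subseteq C_f$ and its Case~3 impossibility argument. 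Your standalone one-step argument lacks that scaffolding, and nothing in your sketch replaces it.

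Two secondary points. First, your upstream transition table is incomplete: a Type~D node can also become Type~C, when the update of $n'$ removes \emph{all} of its $C_i$-only upstream witnesses while adding pure-$C_f$ paths; verifying $Z_c$ in that situation requires precisely the consistency facts that are missing above (one can show consistency of $C'$ forces the old downstream of $n$ into $C_f$, but the post-update downstream needed for $Z_c$ is a different edge set, so the bookkeeping is not automatic). Second, note that even the paper's proof is not context-free: invoking Lemma~\ref{lemma:completeness} presupposes that a careful completion $T=Unn'V$ exists, which holds in the algorithmic setting where Lemma~\ref{lemma:stays_valid} is applied. So the repair is not to push harder on the direct case analysis, but to route through the swap lemma (or to explicitly import its sequence context); your use of Property~\ref{updateorder} to commute the two updates is fine and survives that restructuring.
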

\begin{proof}
For validity, we do not consider the waits. We can directly apply
Lemma~\ref{lemma:completeness}. If a node $n$ is valid in a correct
sequence $T=Unn'V$, then if $\valid(\upd(C_i,U),n')$, $T'=Un'nV$ is a
correct sequence, meaning $\valid(\upd(C_i,Un'),n)$. So, the update of
any other node does not affect the validity of $n$. 
\end{proof}

\begin{lemmat}
\label{lemma:stays_high}
If during the update, a node has priority $P_0$, it retains priority $P_0$ until it is updated.
\end{lemmat}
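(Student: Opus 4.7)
The plan is to show by induction on the number of update steps taken after $n$ first enters $P_0$ that $n$ remains in $P_0$ until it is itself picked for update. Fix a partial careful waited sequence $S=(t_1\cdots t_r,\,w_1\cdots w_s)$ at which $n$ has priority $P_0$, i.e.\ $n$ is valid in $C_{t_r}=\upd(C_i,t_1\cdots t_r)$ and $\lnot\wait(n,S)$. Let $S'=(t_1\cdots t_{r+1},W')$ be the partial careful waited sequence produced in the next step of Algorithm~\ref{alg:pickandwait}, where $W'$ is either $W$ (if $t_{r+1}$ is picked from the current $P_0$ set without inserting a wait) or $W\cdot t_{r+1}$ (if $P_0$ had become empty, forcing a wait before $t_{r+1}$). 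It suffices to prove the one-step claim $\lnot\wait(n,S')$; the full statement then follows by iterating the argument until $n$ is the node chosen for update.

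First I would invoke Lemma~\ref{lemma:stays_valid} to conclude that $n$ remains valid in $C_{t_{r+1}}=\upd(C_{t_r},t_{r+1})$, since $n\neq t_{r+1}$ and both are valid in $C_{t_r}$. This makes $\wait(n,S')$ well-defined. I would then argue by contradiction: suppose some $x\in[1,r+1]$ witnesses $\wait(n,S')=\true$, meaning $n$ is invalid in $\upd(C_i,t_1\cdots t_x)$ and no $w_y\in W'$ equals any $t_z$ for $z\in(x,r+1]$. The case $x=r+1$ is ruled out by the just-established validity of $n$ in $C_{t_{r+1}}$. For $x\le r$, the hypothesis $\lnot\wait(n,S)$ supplies some $w_y\in W$ with $w_y=t_z$ for $z\in(x,r]$; since $W\subseteq W'$ and $(x,r]\subseteq(x,r+1]$, the same pair witnesses coverage in $S'$ as well, contradicting the assumption.

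When the algorithm inserts a wait (i.e., refreshes $P_0$ from the current set $U$), the only effect on the wait predicate is to extend $W$, which can only make $\wait$ easier to be $\false$; so the one-step argument handles both algorithmic branches uniformly. The main obstacle is bookkeeping: the wait predicate quantifies over \emph{all} prior positions where $n$ was invalid, so one must verify that extending $S$ neither introduces a newly uncovered invalidity nor invalidates any existing covering. Both risks are eliminated by the combination of Lemma~\ref{lemma:stays_valid} (validity is monotone forward in time, so no fresh invalidities can appear after $n$ becomes valid) and the monotonicity of $W$ under extension (coverings, once present, persist). No additional machinery is required.
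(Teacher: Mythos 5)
Your proof is correct and takes essentially the same route as the paper's: both reduce the claim to Lemma~\ref{lemma:stays_valid} (so no new invalid position $x>r$ can arise) together with the observation that extending the sequence preserves every existing covering pair $w_y=t_z$ in the $\wait$ predicate. Your write-up simply makes explicit the one-step induction and the case split on the witness $x$ that the paper's two-sentence argument leaves implicit.
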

\begin{proof}
Node $n$ is a priority $P_0$ node when the partial careful waited sequence $Z = (t_1t_2\cdots t_r,w_1w_2\cdots w_s)$ has been built. If $n$ is updated after $t_r$, $\wait(n,Z)=\false$. However, from Lemma~\ref{lemma:stays_valid}, since $n$ stays valid in every configuration after the update of $t_r$, $\wait(n,Z)=\false$ no matter where $n$ is updated.
\end{proof}
}

\makeproof{\theoremwaitcorrectness}{th:wait:correctness}{theoremt}{Theorem}
{
Algorithm~\ref{alg:orderupdate} with Algorithm~\ref{alg:pickandwait}
as its subroutine on Line~\ref{alg:line:pick} produces a correct
waited sequence. 
}
{
Using Lemma~\ref{lemma:waitcond}, every node that is not valid at the
start is a priority $P_1$ node when it becomes valid. We pick $P_0$
nodes with higher priority, and do not wait before them. When $P_0 =
\emptyset$, we wait before we pick any node in $P_1$. By definition,
adding a wait changes the priority of all nodes in $P_1$ to $P_0$. From
Lemma~\ref{lemma:stays_high}, these nodes retain priority $P_0$ until they are
updated, showing that waits are correctly placed. 
}{}{}

\extendedversion{
We now prove that our greedy scheme is optimal. For this purpose, let us prove the following two lemmas:
\begin{lemmat}
\label{lemma:wait_1}
If $Z=(T,W) = (UVnY,w_1\cdots w_{k})$ is a careful waited sequence, and in $Z$, after updating nodes in $U$, $n \in P_0$, then $Z'=(T',W') = (UnVY,w'_1\cdots w'_k)$ is a careful waited sequence.
\end{lemmat}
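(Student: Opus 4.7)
My plan is to take $W' = W$ (reusing the same wait positions) and to verify that $(UnVY, W)$ is itself a careful waited sequence. First I would show that $n \notin W$: since $n \in P_0$ after $U$ in $Z$, Lemma~\ref{lemma:stays_high} guarantees that $n$ retains priority $P_0$ throughout the rest of $Z$, which means no wait is placed immediately before $n$ in $Z$. Hence $W$ remains a valid subsequence of $T' = UnVY$, and trivially $|W'| = |W|$.

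Next I would invoke Lemma~\ref{lemma:completeness} to conclude that $T' = UnVY$ is a careful sequence of node updates. The premise that $n$ is valid in $\upd(C_i,U)$ is immediate from $n \in P_0$ after $U$. This handles the requirement that every node update performed in $Z'$ is on a valid node.

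The core of the argument is then checking that the sets induced by $W$ on $T'$ form a consistent update sequence in the sense of Section~\ref{sec:model}. With $W' = W$, the only structural change is that $n$ migrates from its set $S_j$ in $Z$ to the set $S'_{j_U}$ containing the last node of $U$ in $Z'$; every other set is unchanged. For condition (2), given an arbitrary permutation $\pi'$ of $Z'$'s sets and a prefix $P'$ of $\pi'$, I would construct a corresponding permutation $\pi$ of $Z$'s sets by moving $n$ from its position in $\sigma'_{j_U}$ to a canonical position in $\sigma_j$. Since $Z$ is a correct waited sequence, every prefix of $\pi$ is consistent; to transfer this to $P'$, I would slide $n$ across the intervening nodes via repeated applications of Lemma~\ref{lemma:completeness}, using Lemma~\ref{lemma:stays_valid} to ensure $n$ remains valid at each intermediate step. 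Condition (3) (no wait needed within a set) follows similarly: the only new intra-set concern is pairs involving $n$ in $S'_{j_U}$, and since $n$ stays in $P_0$, no wait is required between $n$ and any other node of that set.

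The main obstacle will be the permutation-based consistency transfer: we must carefully track prefixes across the set-structure change where $n$ moves, repeatedly sliding $n$ backwards through intermediate nodes via Lemma~\ref{lemma:completeness} while preserving the validity condition guaranteed by Lemma~\ref{lemma:stays_valid}. Corner cases (when there is no wait between $U$ and $V$ in $Z$, so $S_{j_U}$ already contains part of $V$, or when $V$ is empty so $Z' = Z$) should be handled separately but fall out easily from the same framework.
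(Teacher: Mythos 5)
Your proposal is correct and follows essentially the same route as the paper's proof: keep $W'=W$, use Lemma~\ref{lemma:completeness} (with validity of $n$ after $U$) to get that $T'=UnVY$ is careful, use Lemma~\ref{lemma:stays_high} to conclude no wait is placed before $n$, and use Lemma~\ref{lemma:stays_valid} to argue node by node that each wait in $Z$ suffices at the same position in $Z'$. Your explicit permutation-based verification of condition (2) is more detailed than the paper's two-case argument (which implicitly identifies membership in $W$ with $P_1$ status, just as you do), but it is the same underlying argument, not a different approach.
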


\begin{proof}
From Lemma~\ref{lemma:completeness}, we know that $T'$ is a correct sequence. Here, in addition to $n$ being a valid node, $n$ is a Priority $P_0$ node. Since $n \in P_0$ after updating $U$, from Lemma~\ref{lemma:stays_high}, $s \in P_0$ in both $Z$ and $Z'$. So, $n$ does not get added in $W'$. The partial careful waited sequence consisting only of nodes in $U$ is the same for both $Z$ and $Z'$. Let us complete this sequence by arguing for each node $s$ in $VY$.
\begin{compactitem}
  \item{Case 1:} In $Z$, $s \in P_1$ ($s$ was in $W$). In $Z'$, we keep $s$ in $W'$. We do not add any nodes in $W'$ as compared with $W$.
  \item{Case 2:} In $Z$, $s \in P_0$ ($s$ was not in $W$). In $Z'$, $s \in P_0$. Since we have kept the waits at the same position as $Z$, if a wait was needed between any two nodes (excluding $n$) in $Z$, there is a wait in $Z'$. In $Z$, if $s$ became valid in some configuration $C$, then $s$ is also valid in $\upd(C,n)$ (Lemma~\ref{lemma:stays_valid}). A wait is needed before updating $s$ in $Z'$, if it was needed in $Z$.
\end{compactitem}
Hence we proved that $Z'$ is a careful waited sequence with $|W| = |W'|$.
\end{proof}

\begin{lemmat}
\label{lemma:wait_2}
If $Z=(T,W) = (UVnY,w_1\cdots w_{k})$ is a careful waited sequence, and in $Z$, after updating nodes in $U$, $P_0 = \emptyset \land n \in P_1$, then $Z'=(T',W') = (UnVY,w'_1\cdots w'_k)$ is a careful waited sequence.
\end{lemmat}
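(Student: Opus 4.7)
The plan is to mirror the proof of Lemma~\ref{lemma:wait_1}, but now we must both insert a wait before $n$ in $Z'$ and compensate for it by removing a wait that was already present in $W$, so that $|W'|=|W|$. First, by Lemma~\ref{lemma:completeness} applied to the correct careful sequence underlying $Z$, the rearranged sequence $T' = UnVY$ is itself a correct careful sequence of node updates, so only the placement of waits remains to be analyzed. If $V$ is empty, then $T = T'$ and $Z' = Z$ works verbatim, so assume $V = v_1 \cdots v_m$ with $m \ge 1$.

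The crucial observation is that since $P_0 = \emptyset$ immediately after updating $U$ in $Z$, the node $v_1$, which is the next node scheduled for update in $Z$, must itself lie in $P_1$ at that moment. Hence a wait is required before $v_1$ in $Z$, i.e., $v_1 \in W$. The construction of $W'$ is then obtained from $W$ by removing this wait that preceded $v_1$ and inserting a wait immediately before $n$ in $Z'$; no other positions change. By construction $|W'| = |W|$.

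It remains to verify that $(T', W')$ is a careful waited sequence. The new wait before $n$ is justified because $n \in P_1$ after $U$, so $\wait(n, \cdot)$ holds by hypothesis. Once this wait is placed and $n$ is updated in $Z'$, every node that had been in $P_1$ before the analogous wait in $Z$, in particular $v_1$, is raised to priority $P_0$ in $Z'$, so $v_1$ may be updated without an intervening wait, matching the removal of its wait from $W$. For each subsequent node $s$ in $V$ or $Y$, we argue exactly as in Lemma~\ref{lemma:wait_1}: by Lemma~\ref{lemma:stays_valid}, $s$ remains valid after $n$'s earlier update, and by Lemma~\ref{lemma:stays_high}, any node that was in $P_0$ at the time of its update in $Z$ is still in $P_0$ at the corresponding point in $Z'$, so no additional waits must be introduced.

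The main obstacle will be the bookkeeping to verify the wait condition for each node of $VY$ in $Z'$. Concretely, for each $s \in V \cup Y$ one must show that the union of configurations between the last preceding wait in $Z'$ and the update of $s$ is consistent whenever the analogous union in $Z$ is consistent. Since in $Z'$ the wait before $n$ has already been incurred by the time we begin updating $v_1, \ldots, v_m$, the relevant unions differ from those in $Z$ only in the timing of the edges incident to $n$. A case analysis paralleling the one in Lemma~\ref{lemma:equiv_wait}, split according to whether the last wait preceding $s$ in $Z$ occurred before or after the wait before $v_1$ (equivalently, before or after the wait before $n$ in $Z'$), shows that no inconsistent maximal path is introduced, completing the proof.
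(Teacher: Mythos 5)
Your proof takes essentially the same approach as the paper's: you observe that $P_0=\emptyset$ after $U$ forces a wait before $v_1$ in $Z$, swap that wait to sit immediately before $n$ in $Z'$ (so $|W'|=|W|$), note that this wait promotes $v_1$ to $P_0$ so no wait is needed before it, and handle the remaining nodes of $VY$ exactly as in Lemma~\ref{lemma:wait_1} via Lemmas~\ref{lemma:stays_valid} and~\ref{lemma:stays_high}. The residual bookkeeping you flag at the end is left at the same level of detail in the paper itself, so there is no gap relative to the published argument.
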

\begin{proof}
Similar to Lemma~\ref{lemma:completeness}, the partial careful waited sequence consisting only of nodes in $U$ is the same for both $Z$ and $Z'$. Let $V=v_1\cdots v_g$. Then since $P_0 = \emptyset$ after updating $U$, $v_1$ is in $W$. To construct $Z'$, let us swap $n$ for $v_1$ in $W'$. After this wait, $v_1 \in P_0$, so we do not need to add $v_1$ to $W'$. Then for all nodes $s$ in $v_2\cdots v_gY$, we argue in the same way as in Lemma~\ref{lemma:completeness}, and prove that $Z'$ is a careful waited sequence with $|W| = |W'|$. 
\end{proof}
}

\makeproof{\theoremwaitoptimality}{th:wait}{theoremt}{Theorem}
{
Algorithm~\ref{alg:orderupdate} with Algorithm~\ref{alg:pickandwait}
as its subroutine on Line~\ref{alg:line:pick} produces a correct
and optimal waited sequence of updates, if there exists a correct
waited sequence of updates. 
}
{
We have seen the correctness and completeness of
Algorithm~\ref{alg:orderupdate}. We also proved the correctness of our
approach for minimizing waits (Theorem~\ref{th:wait:correctness}). We
will now prove the optimality of Algorithm~\ref{alg:orderupdate} with
the Algorithm~\ref{alg:pickandwait} modification. Let
$Q_{careless}=(T_{careless},W_{careless})$ be an minimal correct
waited sequence, and $Q_{alg}=(a_1a_2\cdots a_n,b_1\cdots b_{n'})$ be the
sequence generated by Algorithm~\ref{alg:orderupdate} with
Algorithm~\ref{alg:pickandwait} as its subroutine. Using Lemma~\ref{lemma:equiv_wait},
we know there is a minimal careful waited sequence
$Q_{careful}=(s_1s_2\cdots s_n,w_1\cdots w_k)$. Let r be the first index
s.t. $\forall i<r: s_i=a_i \land s_r \neq a_r$. In $Q_{alg}$, if $a_r
\in P_0$, then by Lemma~\ref{lemma:wait_1}, we can generate a careful
sequence $Q'=(s'_1s'_2\cdots s'_n,w'_1\cdots w'_k)$ s.t. $\forall i \leq r:
s'_i=a_i$. In $Q_{alg}$, if $a_r \in P_1$, then from
Algorithm~\ref{alg:pickandwait} we know that $a_r$ was picked because
$P_0 = \emptyset$ after updating nodes
$s_1s_2\cdots s_{r-1}=a_1a_2\cdots a_{r-1}$. By Lemma~\ref{lemma:wait_2}, we
can again generate a minimal careful waited sequence
$Q'=(s'_1s'_2\cdots s'_n,w'_1\cdots w'_k)$ s.t. $\forall i \leq r:
s'_i=a_i$. Using this argument for every index from $i$ to $n$, we can
find a minimal careful waited sequence
$Q''=(s''_1s''_2\cdots s''_n,w''_1\cdots w''_k)$ s.t $\forall i:
s''_i=a_i$. Now since $\forall i: s''_i=a_i$, and our wait condition
is complete (Lemma~\ref{lemma:waitcond}), so $n'=k$. 
}{}{}

\optnewpage

\paragraph{Running Time.}  
The OrderUpdate Algorithm with the $\mathit{OptimalPickAndWait}$ subroutine has the same time complexity
that it had with the $\mathit{SequentialPickAndWait}$ subroutine.
The $\mathit{OptimalPickAndWait}$ subroutine introduces a priority-based node selection mechanism---%
after every wait, it simply moves nodes from the valid set $U$ to the higher priority list $P_0$,
which requires only $O(|N|)$ additional steps in each iteration.
\section{Discussion}
\label{sec:discussion}

\paragraph{Multiple hosts and sinks.}
We can extend our single-source approach to a network with multiple sources $H_A,H_B,H_C,\cdots$.
To do this, we assume that there is a master source $H_1$, and every actual source is connected
to $H_1$, as shown in Figure \ref{fig:multsrc}. This approach works because we update every node
only once, meaning we cannot artificially disable and then re-enable
some sources and keep others. 

\paragraph{Multiple packet types.}
Our approach can be applied in contexts where there are multiple (discrete) packet types,
as long as each forwarding rule matches on a {\em single} packet type---in this case, we simply compute an
update for each packet type, and perform these (rule-granularity) updates independently.
In the more realistic case with {\em symbolic} forwarding rules
(i.e., matching based on {\em first-order formulae over packet header
  fields}), deciding whether a consistent update exists is
\textsc{co-np}-hard.  Specifically, there is a reduction from SAT to
this problem.  In this case, we can consider each edge in a
configuration as being labeled by a formula, and only packets whose
header fields satisfy this formula can be forwarded along that edge.
To show the reduction, we consider a double diamond
(Figure~\ref{fig:doubled_multiple}) with one edge labelled by such a
formula $\varphi$, and all other edges labelled with {\em true}
($\top$).  We have already seen that a consistent update for this
double diamond example is not possible in the situation where packets
(of any type) can flow along all of the edges, so we can see that {\em
  there exists a consistent update if and only if $\varphi$ is
  unsatisfiable}.
This completes the reduction.

\begin{figure}[t]
\centering
\begin{minipage}{.35\textwidth}
  \centering
  \tikzset{ 
	VertexStyle/.append style = { minimum size = 12pt, inner sep = 0pt, font =\tiny, color=black }
	}
  \begin{tikzpicture}
  \SetGraphColor{white}{blue}{magenta}
  \Vertex[L=$H_1$]{H1}
  \EA[L=$H_B$,unit=1.1](H1){B}
  \NOEA[L=$H_A$,unit=1.1](H1){A}
  \SOEA[L=$H_C$,unit=1.1](H1){C}
  \draw[dashed,-triangle 45](H1.north) -- (A.west);
  \draw[-triangle 45](H1.north east) -- (A.south west);
  
  \draw[dashed,-triangle 45](H1.south east) -- (C.north west);
  \draw[-triangle 45](H1.south) -- (C.west);
  
  \draw[dashed,-triangle 45](H1.north east) -- (B.north west);
  \draw[-triangle 45](H1.south east) -- (B.south west);
  
  \draw[dashed,-triangle 45] (B) -- (5:2cm);
  \draw[-triangle 45] (B) -- (-5:2cm);

  \draw[dashed,-triangle 45] (A) -- (35:2.2cm);
  \draw[-triangle 45] (A) -- (25:2.1cm);

  \draw[dashed,-triangle 45] (C) -- (-25:2.1cm);
  \draw[-triangle 45] (C) -- (-35:2.2cm);

  \end{tikzpicture}
  \caption{Multiple sources.}
  \label{fig:multsrc}
\end{minipage}%
\begin{minipage}{.55\textwidth}
  \centering
  \tikzset{ 
	VertexStyle/.append style = { minimum size = 12pt, inner sep = 0pt, font =\tiny }
	}
  \begin{tikzpicture}
  \Vertex[L=$H_1$]{H1}
  \NOEA[unit=1.12](H1){A}
  \SOEA[unit=1.12](H1){B}
  \NOEA[unit=1.12](B){C}
  \NOEA[unit=1.12](C){D}
  \SOEA[unit=1.12](C){E}
  \NOEA[unit=1.12,L=$H_2$](E){H2}

  \draw[-triangle 45](H1.south east) -- node[midway,fill=white,font=\small] {$\top$} (B.north west);
  \draw[dashed,-triangle 45](H1.north east) -- node[midway,fill=white,font=\small] {$\varphi$} (A.south west);
  \draw[dashed,-triangle 45](A.south east) -- node[midway,fill=white,font=\small] {$\top$} (C.north west);
  \draw[-triangle 45](B.north east) -- node[midway,fill=white,font=\small] {$\top$} (C.south west);
  \draw[dashed,-triangle 45](C.north east) -- node[midway,fill=white,font=\small] {$\top$} (D.south west);
  \draw[-triangle 45](C.south east) -- node[midway,fill=white,font=\small] {$\top$} (E.north west);
  \draw[dashed,-triangle 45](D.south east) -- node[midway,fill=white,font=\small] {$\top$} (H2.north west);
  \draw[-triangle 45](E.north east) -- node[midway,fill=white,font=\small] {$\top$} (H2.south west);
  \end{tikzpicture}
  \caption{Double diamond case with symbolic forwarding rules.}
  \label{fig:doubled_multiple}
\end{minipage}%
\end{figure}

\section{Related Work}
\label{sec:relwork}

\paragraph{Consistency.}
Our core problem is motivated by earlier work by
\citet{Reitblatt:2012:ANU:2342356.2342427} that proposed {\em per-packet consistency}
and provided basic update mechanisms. %

\paragraph{Exponential Search-Based Network Update Algorithms.}

There are various approaches for producing a sequence of switch updates guaranteed to respect certain
path-based consistency properties (e.g., properties representable using temporal logic, etc.).
For example, \citet{McClurg:2015:ESN:2737924.2737980} use counter-example guided search and incremental LTL model checking,
FLIP \cite{VC16} uses integer linear programming,
and CCG \cite{189032} uses custom reachability-based graph algorithms.
{Other works such as Dionysus \cite{Jin:2014:DSN:2619239.2626307},
zUpdate \cite{Liu:2013:ZUD:2486001.2486005}, and
\citet{luo2016arrange}, seek to perform updates with respect to quantitative properties.}

\paragraph{Complexity results.}
\citet{MahajanW13}
introduce dependency-graphs for network updates,
and propose properties which could be addressed via this general approach.
They show how to handle one of the properties ({\em loop-freedom}) in a minimal way.
\citet{yuan2014generating} detail general algorithms for building dependency
graphs and using these graphs to perform a consistent update.
\citet{forsterconsistent} extend \cite{MahajanW13}, and
show that for {\em blackhole-freedom}, computing an update with a minimal
number of rounds is \textsc{np}-hard (when memory limits are assumed on switches).
They also show \textsc{np}-hardness results for rule-granular loop-free updates with maximal parallelism.
Per-packet consistency in our problem is stronger than loop freedom and blackhole freedom,
but we only consider solutions where each switch is updated {\em once}, and where a switch update
swaps the entire old forwarding table with the new one simultaneously.

{\citet{forster2016power} examine loop-freedom,
showing that maximizing the number for forwarding rules updated simultaneously is \textsc{np}-hard.}
\citet{Ludwig:2015:SLN:2767386.2767412} show how to minimize number of update
rounds with respect to loop-freedom. They show that deciding
whether a k-round schedule exists is \textsc{np}-complete, and they present
a polynomial algorithm for computing a weaker variant of loop-freedom.
{\citet{amiri2016transiently} present an \textsc{np}-hardness result for
greedily updating a maximal number of forwarding rules in this context.}
Additionally, \citet{LudwigRFS14} investigate optimal
updates with respect to a stronger property, namely \emph{waypoint
enforcement} in addition to loop freedom.  They produce an update
sequence with a minimal number of waits, using mixed-integer
programming.
\citet{ludwig2016transiently} show that the decision problem is \textsc{np}-hard.

\citet{mattos2016reverse} propose a relaxed variant of per-packet consistency, where a packet may be processed by
several subsequent configurations (rather than a {\em single} configuration), and they present a corresponding
polynomial graph-based algorithm for computing updates.
\citet{dudycz2016can} show that simultaneously computing {\em two} network updates
while requiring a minimal number of switch updates (``touches'') is \textsc{np}-hard.
\citet{brandtconsistent} give a polynomial algorithm to decide if congestion-free
update is possible when flows are ``splittable'' and/or not restricted to
be integer.

\section{Conclusion}
\label{sec:conclusion}

We presented a polynomial-time algorithm to find a consistent update
order on a single packet type.
We then presented a modification to the algorithm, which finds a consistent update order
with a minimal number of waits.
Finally, we proved that this modification is correct, complete, and optimal.
\newpage 

\bibliographystyle{plainnat}
\bibliography{refs}

\end{document}